\theoremstyle{plain}
\newtheorem{theorem}{Theorem}[section]
\newtheorem{lemma}[theorem]{Lemma}
\newtheorem{corollary}[theorem]{Corollary}
\newtheorem{proposition}[theorem]{Proposition}
\theoremstyle{definition}
\newtheorem{remark}[theorem]{Remark}
\newcommand*{\X}{\mathbb{X}}
\newcommand*{\cN}{\mathcal{N}}
\newcommand*{\cM}{\mathcal{M}}
\newcommand*{\cP}{\mathcal{P}}
\newcommand*{\cQ}{\mathrm{Q}}
\newcommand*{\cR}{\mathcal{R}}
\newcommand*{\cT}{\mathcal{T}}
\newcommand*{\cU}{\mathcal{U}}
\newcommand*{\bPr}{\mathbb{P}}
\newcommand*{\N}{\mathbb{N}}
\newcommand*{\R}{\mathbb{R}}
\newcommand*{\TT}{\mathrm{T}}
\newcommand*{\indic}{\mathds{1}}
\newcommand*{\LL}{\mathrm{L}}
\newcommand*{\St}{\mathrm{S}}
\newcommand*{\id}{\mathrm{id}}
\newcommand*{\poly}{\mathrm{poly}}
\newcommand*{\supp}{\mathrm{supp}}
\newcommand*{\tr}{\mathrm{tr}}
\newcommand*{\ket}[1]{| #1 \rangle}
\newcommand{\proj}[1]{|#1\rangle\!\langle #1|}
\newcommand*{\Pos}{\mathrm{P}}
\newcommand*{\conv}{\mathrm{conv}}
\newcommand*{\TPCP}{\mathrm{TPCP}}
\newcommand*{\MD}{D_{\mathbb{M}}}
\newcommand*{\ci}{\mathrm{i}} % imaginary number
\newcommand*{\di}{\mathrm{d}} % integration d
\newcommand{\norm}[1]{\left\lVert#1\right\rVert}
\begin{document}

\title{Strengthened Monotonicity of Relative Entropy\\ via Pinched Petz Recovery Map}

\author[1]{David Sutter}
\author[2]{Marco Tomamichel}
\author[3]{Aram W.~Harrow}

\affil[1]{Institute for Theoretical Physics, ETH Zurich, Switzerland}
\affil[2]{School of Physics, The University of Sydney, Australia}
\affil[3]{Center for Theoretical Physics, MIT, USA}

\date{}

\maketitle
\begin{abstract}
The quantum relative entropy between two states satisfies a monotonicity property meaning that applying the same quantum channel to both states can never increase their relative entropy.  It is known that this inequality is only tight when there is a ``recovery map'' that exactly reverses the effects of the quantum channel on both states. In this paper we strengthen this inequality by showing that the difference of relative entropies is bounded below by the measured relative entropy between the first state and a recovered state from its processed version. The recovery map is a convex combination of rotated Petz recovery maps and perfectly reverses the quantum channel on the second state. As a special case we reproduce recent lower bounds on the conditional mutual information such as the one proved in [Fawzi and Renner, Commun.\ Math.\ Phys., 2015]. Our proof only relies on elementary properties of pinching maps and the operator logarithm.
\end{abstract}

%%%%%%%%%%%%%%%%%%%%%%%%%%%%%%%%%%
%%%%%%%%%%%%%%%%%%%%%%%%%%%%%%%%%%
\section{Introduction} \label{sec_intro}
For a quantum state $\rho$ and a non-negative operator $\sigma$ the relative entropy---$D(\rho \| \sigma)\!:=\tr{(\rho (\log \rho - \log \sigma))}$ if the support of $\rho$ is included in the support of $\sigma$, and $+\infty$ otherwise---is a statistical distance measure with numerous applications. Furthermore, several other entropy measures can be derived from the relative entropy, including mutual information, entropy and conditional entropy. %(see~\cite{MLDSFT13} and references therein)
A fundamental inequality, known as \emph{monotonicity} or \emph{data-processing inequality} of the relative entropy, states that the relative entropy is non-increasing with respect to physical evolutions, i.e., $D(\rho \| \sigma) \geq D(\cN(\rho) \| \cN(\sigma))$ where $\cN$ denotes a quantum channel~\cite{lindblad75,uhlmann77}. It generalizes the celebrated \emph{strong subadditivity} of quantum entropy~\cite{LieRus73_1,LieRus73} which ensures that for any tripartite state $\rho_{ABC}$ the conditional mutual information is non-negative, i.e., $I(A:C|B):=H(AB)+H(BC)-H(ABC)-H(B) \geq 0$, where $H(A):=-\tr(\rho_A \log \rho_A)$ denotes the von Neumann entropy.

Inspired by a question posed in~\cite{WL12}, a series of recent works showed that these fundamental inequalities can be refined in the context of recoverability, i.e., by investigating the question of how good a quantum evolution can be reversed by applying a recovery map. This point of view is motivated by the case where equality holds. It is known that $D(\rho \| \sigma) = D(\cN(\rho) \| \cN(\sigma))$ if and only if there exists a recovery map $\cT$ that simultaneously recovers $\rho$ from $\cN(\rho)$ and $\sigma$ from $\cN(\sigma)$, respectively, i.e., $(\cT\circ \cN)(\rho)=\rho$ and $(\cT \circ \cN)(\sigma)= \sigma$, respectively~\cite{Pet86,Petz88,Pet03}. Furthermore, on the support of $\cT(\sigma)$ the recovery map $\cT$ can be assumed to be of the form
\begin{align} \label{eq_PetzRecMap}
\cT_{\sigma,\cN} \ : \  X \mapsto \sigma^{\frac{1}{2}} \cN^{\dagger} \bigl( \cN(\sigma)^{-\frac{1}{2}} X \, \cN(\sigma)^{-\frac{1}{2}} \bigr) \sigma^{\frac{1}{2}} \ ,
\end{align}
which is referred to as \emph{Petz recovery map} or \emph{transpose map} and $\cN^\dagger$ is the adjoint of $\cN$. A recent result (first proven in~\cite{FR14} and afterwards tightened by~\cite{BHOS14}) states that for any tripartite density operator $\rho_{ABC}$ there exists a recovery map $\cR_{B \to BC}$ such that
\begin{align} \label{eq_FR}
I(A:C|B) &\geq \MD\bigl(\rho_{ABC} \big\| \cR_{B \to BC}(\rho_{AB})\bigr) \\
&\geq - 2 \log F\bigl(\rho_{ABC},\cR_{B \to BC}(\rho_{AB}) \bigr) \ ,
\end{align}
where $\MD(\cdot \| \cdot)$ denotes the \emph{measured relative entropy}. It is defined as the supremum of the relative entropy with measured inputs over all projective measurements\footnote{Without loss of generality these can be assumed to be rank-one projectors.} $\cM = \{M_x\}$, i.e., 
\begin{align} \label{eq_defMeasRelEnt}
&\MD(\rho \| \sigma) := \sup \Big\{ D\bigl( \cM(\rho) \big\| \cM(\sigma) \bigr) :  \cM(\rho) = \sum_{x} \tr(\rho M_x) \proj{x} \text{ with } \sum_{x} M_x = \id \Big\} \ ,
\end{align}
where $\{\ket{x}\}$ is a finite set of orthonormal vectors. This quantity was studied in~\cite{HP91,Hay01}. As observed in~\cite{FBT15} it is not restrictive to consider projective measurements in~\eqref{eq_defMeasRelEnt}, i.e., the supremum over all positive-operator valued measures (POVMs) coincides with the supremum over all projective measurements.\footnote{More precisely, in~\cite{FR14} the second inequality in~\eqref{eq_FR} was proven where in~\cite{BHOS14} the first inequality of~\eqref{eq_FR} was shown for the measured relative entropy defined via an optimization over all possible measurements. Later in~\cite{FBT15} it was shown that without loss of generality we can consider projective measurements only.}

The \emph{fidelity} between two non-negative operators $\rho$ and $\sigma$ is defined as $F(\rho,\sigma):=\norm{\sqrt{\rho}\sqrt{\sigma}}_1$. The second inequality in~\eqref{eq_FR} is a consequence of the monotonicity of quantum R\'enyi divergence in the order parameter~\cite{MLDSFT13} and of the fact that for any two states there exists an optimal measurement that does not increase their fidelity~\cite[Section~3.3]{Fuc96}.
As shown in~\cite{FR14,SFR15}, and independently by a different method in~\cite{Wilde15}, the recovery map satisfying the second inequality of~\eqref{eq_FR} can be assumed to be a rotated Petz map with commuting unitaries.

A closely related and more general statement to the second inequality of~\eqref{eq_FR} in terms of the relative entropy has been proven recently~\cite{Wilde15} (see also the preliminary result in~\cite{BLW14}). It states that for any non-negative operator $\sigma$, any density operator $\rho$ such that $\supp(\rho) \subseteq \supp(\sigma)$, and any quantum channel $\cN$ we have\footnote{We recover the second inequality of~\eqref{eq_FR} by choosing $\rho = \rho_{ABC}$, $\sigma=\rho_{BC}$ and $\cN(\cdot)=\tr_C(\cdot)$.}
\begin{align} \label{eq_Wilde}
D(\rho \| \sigma) - D\bigl( \cN(\rho) \big\| \cN(\sigma) \bigr) \geq -2 \log \Bigl( \sup_{t \in \R} F\bigl( \rho, (\cU_{\sigma,t}\circ \cT_{\sigma,\cN} \circ \cU_{\cN(\sigma),-t} \circ \cN)(\rho) \bigr) \Bigr) \ ,
\end{align}
where $\cU_{\kappa,t}(\cdot):=\kappa^{\ci t} (\cdot) \kappa^{-\ci t}$. We note that this ensures the existence of a recovery map satisfying~\eqref{eq_Wilde} that perfectly reconstructs $\sigma$ from $\cN(\sigma)$.\footnote{This map does not necessarily have the form given in~\eqref{eq_Wilde}.}
Furthermore, the recovery map that fulfills~\eqref{eq_Wilde} has a potential dependence on $\rho$ (hidden in the supremum).

The original proof of~\eqref{eq_FR} given in~\cite{FR14} relies on properties of R\'enyi entropies and de Finetti arguments. Subsequently in~\cite{BHOS14}, a different proof has been presented that is based on state redistribution and also de Finetti type arguments. (As shown recently in~\cite{TB15}, the de Finetti type arguments can be replaced by an argument based on semidefinite programming.) In~\cite{Wilde15}, a proof for~\eqref{eq_Wilde} has been presented that uses the Hadamard three-line theorem.

Let $\LL(A,B)$ be the set of bounded linear operators from $A$ to $B$. A linear map $\cN \in \LL(A,B)$ is called \emph{$n$-tensor-stable positive} for some number $n \in \N$ if the map $\cN^{\otimes n}$ is positive. A linear map $\cN \in \LL(A,B)$ is called \emph{tensor-stable positive} if the map $\cN$ is $n$-tensor stable positive for all $n \in \N$. Completely positive maps are tensor stable positive, but this is not a necessary condition (see~\cite{MRW15} for an overview and characterization of tensor-stable positive maps). We note that it is well-known that the data processing inequality remains valid when considering trace-preserving tensor-stable positive linear maps (see, e.g.~\cite[Theorem~5.5]{hayashi_book}).

\subsection{Main result.}
We prove that for any non-negative operator $\sigma$, any density operator $\rho$ such that $\supp(\rho) \subseteq \supp(\sigma)$, and any trace-preserving tensor-stable positive linear map $\cN$ there exists a recovery map $\cR$ such that $(\cR\circ \cN)(\sigma) = \sigma$ and
\begin{align} \label{eq_overviewOurResult}
D(\rho \| \sigma) - D\bigl(\cN(\rho) \big\| \cN(\sigma) \bigr) 
\geq \MD\bigl(\rho \, \big\| (\cR \circ \cN)(\rho)\bigr) \geq -2 \log  F\bigl(\rho , (\cR \circ \cN)(\rho)\bigr) \ .
\end{align}
We refer to Theorem~\ref{thm_monoMeasRel} for a more precise statement. We note that this is a strengthened version of~\eqref{eq_Wilde} that directly implies~\eqref{eq_FR}. Our proof is conceptually different from all the previous ones. It is based on a particular choice of a recovery map that is constructed out of \emph{pinching maps}. These pinching maps fulfill several desirable properties that offer a simple proof for the new inequality.

 \subsection{Notation.}
For $n \in \N$ we define $[n]:=\{1,2,\ldots,n\}$. 
 In this work, all Hilbert spaces are assumed to be finite-dimensional. Let $\St(A)$ denote the set of density operators on a Hilbert space $A$ and for $\sigma \in \St(A)$ let $\St_{\sigma}(A):=\{\rho \in \St(A): \supp(\rho) \subseteq \supp(\sigma)\}$, where $\supp(\rho)$ is the support of $\rho$. $\Pos(A)$ is the set of non-negative operators on $A$.  
The set of trace-preserving completely positive maps from $A$ to $B$ is denoted by $\TPCP(A,B)$. We use $\indic_{\{\textnormal{statement}\}}$ to denote the indicator of the statement. It is equal to $1$, if the statement is true, and it is equal to $0$, if the statement is false.

%%%%%%%%%%%%%%%%%%%%%%%%%%%%%%%%%% 
\section{Variations on the Petz Recovery Map} \label{sec_variantsRecMaps}

In this section, we introduce two recovery maps that are closely related to the Petz recovery map defined in~\eqref{eq_PetzRecMap}. One of them is based on the concept of \emph{pinching}. The other one is a rotated version of the Petz recovery map. These two recovery maps play a key role in this work.
We further show a close connection between the two recovery maps (see Lemma~\ref{lem_pinchingVsRotPetz}).

\subsection{Pinching recovery map.} 
Let $H \in \LL(A,A)$ be a self-adjoint operator with eigenvalue decomposition $H=\sum_{x} \mu_x P_{x}$, where $x$ ranges between $1$ and $d = \dim(A)$ and $P_{x}$ is the projector onto the eigenspace corresponding to the eigenvalue $\mu_x$.  The \emph{pinching map} for this spectral decomposition of the operator $H$ is given by
 \begin{align}
 \cP_{H} \ : \ K \mapsto \sum_{x}  P_x \,  K \,  P_{x} \ .
 \end{align}
 Such maps are trace-preserving, completely positive, unital, self-adjoint, and can be viewed as dephasing operations that remove off-diagonal blocks of a matrix.
 
Similarly, for any $n \in \mathbb{N}$ we define projectors onto type subspaces on $A^{\otimes n}$ corresponding to the eigendecomposition of $H$. 
Namely, we say that a sequence $\{ x_1, x_2, \ldots, x_n \}$ is of type $\lambda = \big(\lambda_1, \lambda_2, \ldots, \lambda_{d} \big)$ if $\sum_{i=1}^n \indic_{\{x_1=x\}} = \lambda_x$ for all $x$.\footnote{More information about the method of types can be found, e.g., in~\cite{csiszar98}.}
For such a type $\lambda$, we define
\begin{align}
  P_{\lambda} := \sum_{\{ x_1, x_2, \ldots, x_n \} \textrm{ of type } \lambda } \ \bigotimes_{i=1}^n P_{x_i} ,
\end{align}
and the pinching map 
\begin{align}
  \cP_{H,n} \ : \ K \mapsto \sum_{\lambda \in \Lambda_n}  P_{\lambda} \,  K \,  P_{\lambda}
\end{align}
where the sum goes over the set $\Lambda_n$ of all types of $n$ symbols in the set $\{1, 2, \ldots, d\}$.  Clearly $\cP_{H,1} = \cP_H$.
The number of such types is bounded according to~\cite[Lemma~II.1]{csiszar98}, i.e.\
\begin{align}
|\Lambda_n| \leq 
\left(\begin{matrix} n + d - 1 \\ d - 1 \end{matrix}
 \right) \leq \frac{(n+d-1)^{d -1}}{(d -1 )!} = O\bigl(\poly(n)\bigr) \ . \label{eq_csiszar}
\end{align}
 
 The pinching map fulfills four properties that are heavily used in this article: (i) $\cP_{H,n}(X)$ commutes with $H^{\otimes n}$ for any $X \in \Pos(A^{\otimes n})$, (ii) $\cP_{H,n}(H^{\otimes n}) = H^{\otimes n}$, (iii) $\tr(\cP_{H,n}(X) H^{\otimes n})=\tr(XH^{\otimes n})$ for any $X \in \Pos(A^{\otimes n})$, and (iv) it satisfies the following operator inequality (sometimes referred to as \emph{Hayashi's pinching inequality}~\cite{hayashi02}) 
\begin{align} \label{eq_hayashiPinchingIneq}
\cP_{H,n}(X) \geq \frac{1}{|\Lambda_n|} X \quad \textnormal{for all } X \in \Pos(A^{\otimes n})  \ .
\end{align}
More information about pinching maps together with a simple proof of~\eqref{eq_hayashiPinchingIneq} can be found in~\cite[Section~2.6.3]{Marco_book} (see also~\cite[Section~4.4]{carlen_book}). For any $\sigma \in \Pos(A)$, any $\cN \in \TPCP(A,B)$, and $n \in \N$ we define the \emph{pinching recovery map} $\cR^n_{\sigma,\cN}:  \Pos(B^{\otimes n}) \to \Pos(A^{\otimes n})$ by
 \begin{align} \label{eq_recMap}
X_{B^n} \mapsto & \,\, \bigl(\cP_{\sigma,n}\circ (\cT_{\sigma,\cN})^{\otimes n}\circ \cP_{\cN(\sigma),n}\bigr)(X_{B^n}) \nonumber \\
 &\hspace{12mm}=(\sigma^{\frac{1}{2}})^{\otimes n} \cP_{\sigma,n}\Big((\cN^{\dagger})^{\otimes n} \Big[ \big(\cN(\sigma)^{-\frac{1}{2}}\bigr)^{\otimes n} \cP_{\cN(\sigma),n} (X_{B^n}) \big(\cN(\sigma)^{-\frac{1}{2}}\big)^{\otimes n}  \Big]  \Big)  (\sigma^{\frac{1}{2}})^{\otimes n}  \ ,
\end{align}
where we used the property (i) of pinching maps for the equality step.
This map is trace-preserving and completely positive for any $n \in \N$. It is clearly completely positive, as it can be written as a composition of completely positive maps. That it is trace-preserving follows by employing some properties of pinching maps mentioned above. To simplify the notation we consider the case where $n=1$. (The generalization to $n\in \N$ is immediate.)
For any $X_B \in \Pos(B)$ we find
\begin{align}
\tr\left( \sigma^{\frac{1}{2}} \cP_{\sigma} \left(\cN^{\dagger} \bigl( \cN(\sigma)^{-\frac{1}{2}} \cP_{\cN(\sigma)} (X_B) \cN(\sigma)^{-\frac{1}{2}}  \bigr)  \right)  \sigma^{\frac{1}{2}} \right)
&= \tr \Bigl( \sigma \cN^{\dagger} \bigl( \cN(\sigma)^{-\frac{1}{2}} \cP_{\cN(\sigma)} (X_B) \cN(\sigma)^{-\frac{1}{2}}  \bigr)  \Bigr) \\
&= \tr \bigl( \cN(\sigma)  \cN(\sigma)^{-\frac{1}{2}} \cP_{\cN(\sigma)} (X_B) \cN(\sigma)^{-\frac{1}{2}} \bigr) \\
&=\tr\bigl( \cP_{\cN(\sigma)} (X_B) \bigr) \\
&=\tr(X_B) \ ,
\end{align}
where we used properties (i) and (iii) of the pinching map together with the definition of the adjoint channel.

\subsection{Rotated Petz recovery map.}
For any $\sigma \in \Pos(A)$ let $\sigma = \sum_{k \in [d_1]} \lambda_k P_k$ be an eigenvalue decomposition of $\sigma$. Here, $d_1 \leq \dim(A)$, $\{ P_k \}_{k \in [d_1]}$ are mutually orthogonal projectors, and the $\lambda_k$ are all different. Using this and a vector $\vartheta = (\vartheta_1,\ldots,\vartheta_{d_1}) \in [0,2\pi]^{\times d_1}$, we define the unitary $U_{\sigma}^{\vartheta} := \sum_{k \in [d_1]} \exp(\ci \vartheta_k) P_k$.
Furthermore, for any $\cN \in \TPCP(A,B)$ and an eigenvalue decomposition of $\mathcal{N}(\sigma)$, we use the same construction to define $U_{\mathcal{N}(\sigma)}^{\varphi}$. 
  We note that due to the evident fact that $P_j P_k = \indic_{\{j=k\}} P_j$, the unitaries $U_{\sigma}^{\vartheta}$ and $U_{\cN(\sigma)}^{\varphi}$ commute with $\sigma$ and $\cN(\sigma)$, respectively, for any $\vartheta \in [0,2\pi]^{\times d_1}$ and any $\varphi \in [0,2\pi]^{\times d_2}$ with $d_1 \leq \dim(A)$ and $d_2 \leq \dim(B)$.
  
With the help of these unitaries we define a (doubly) \emph{rotated Petz recovery map} $\cT_{\sigma,\cN}^{\varphi,\vartheta}\ :\  \Pos(B) \to \Pos(A)$ by
\begin{align} \label{eq_rotPetzMap}
   &X_B\! \mapsto  U_{\sigma}^{\vartheta} \sigma^{\frac{1}{2}} \cN^{\dagger} \bigl( \cN(\sigma)^{-\frac{1}{2}} U_{\mathcal{N}(\sigma)}^{\varphi} X_{B} U_{\mathcal{N}(\sigma)}^{\varphi\dag} \cN(\sigma)^{-\frac{1}{2}} \bigr)   \sigma^{\frac{1}{2}} U_{\sigma}^{\vartheta\dag} \ .
\end{align}
 For any $\sigma \in \Pos(A)$ and $\cN \in \TPCP(A,B)$ we denote the convex hull of rotated Petz recovery maps by 
\begin{align} \label{eq_convexHullRotatedPetz}
\TT_{\sigma,\cN}:=\conv\Bigl(\cT_{\sigma,\cN}^{\varphi,\vartheta} \ : \ \vartheta \in [0,2\pi]^{\times d_1}, \varphi \in[0,2\pi]^{\times d_2}  \Bigr) \ .
\end{align}
The recovery map $\cT_{\sigma,\cN}^{\varphi,\vartheta}$ is trace-preserving and completely positive. It is clearly completely positive, as it can be written as a composition of completely positive maps. Furthermore, we find for any $X_B \in \Pos(B)$
\begin{align}
& \tr \Bigl( U_{\sigma}^{\vartheta} \sigma^{\frac{1}{2}}  \cN^{\dagger} \bigl( \cN(\sigma)^{-\frac{1}{2}} U_{\mathcal{N}(\sigma)}^{\varphi} X_{B} U_{\mathcal{N}(\sigma)}^{\varphi\dag} \cN(\sigma)^{-\frac{1}{2}} \bigr)  \sigma^{\frac{1}{2}} U_{\sigma}^{\vartheta\dag}   \Bigr) \nonumber\\
&\hspace{50mm} = \tr\Bigl(\sigma \cN^{\dagger} \bigl( \cN(\sigma)^{-\frac{1}{2}} U_{\mathcal{N}(\sigma)}^{\varphi} X_{B} U_{\mathcal{N}(\sigma)}^{\varphi\dag} \cN(\sigma)^{-\frac{1}{2}} \bigr) \Bigr) \\
&\hspace{50mm} =\tr\bigl(\cN(\sigma) \cN(\sigma)^{-\frac{1}{2}}  U_{\mathcal{N}(\sigma)}^{\varphi} X_{B} U_{\mathcal{N}(\sigma)}^{\varphi\dag} \cN(\sigma)^{-\frac{1}{2}}  \bigr) \\
&\hspace{50mm}= \tr(X_B) \ ,
\end{align}
which shows that $\cT_{\sigma,\cN}^{\varphi,\vartheta}$ is trace-preserving.

\subsection{Connection between pinching and rotated Petz recovery map.}
The following lemma shows that the pinching recovery map (defined in~\eqref{eq_recMap}) can be written as a convex combination of tensor products of rotated Petz recovery maps (defined in~\eqref{eq_rotPetzMap}). This connection will be important in the proof of Theorem~\ref{thm_monoMeasRel}.

\begin{lemma}\label{lem_pinchingVsRotPetz}
For any $\sigma \in \Pos(A)$, any $\cN \in \TPCP(A,B)$, and any $n\in \N$, let $\cR_{\sigma,\cN}^n$ be the pinching recovery map defined in~\eqref{eq_recMap} and let $\cT^{\varphi,\vartheta}_{\sigma,\cN}$ be the rotated Petz recovery map defined in~\eqref{eq_rotPetzMap}. We have
\begin{align}
\cR_{\sigma,\cN}^n\big(\cdot\big)  = \frac{1}{(2 \pi)^{d_1}} \int_{[0,2\pi]^{\times d_1}}\di \vartheta  \frac{1}{(2 \pi)^{d_2}}\int_{[0,2\pi]^{\times d_2}}  \di \varphi  (\cT^{\varphi,\vartheta}_{\sigma,\cN})^{\otimes n}\big(\cdot\big)  \ .
\end{align}
\end{lemma}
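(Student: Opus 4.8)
The plan is to reduce the whole statement to a single averaging identity and then apply it twice. The identity is: for any $\kappa \in \Pos(A)$ with spectral decomposition $\kappa = \sum_{k} \lambda_k P_k$ into its $d$ \emph{distinct} eigenvalues, with $U_\kappa^\theta = \sum_k \exp(\ci \theta_k) P_k$, and for any operator $Y$ on $A^{\otimes n}$,
\begin{align} \label{eq_plan_avg}
\frac{1}{(2\pi)^{d}} \int_{[0,2\pi]^{\times d}} \di\theta \, (U_\kappa^\theta)^{\otimes n} \, Y \, (U_\kappa^{\theta\dagger})^{\otimes n} = \cP_{\kappa,n}(Y) \ .
\end{align}
To see this, first I would expand $(U_\kappa^\theta)^{\otimes n} = \sum_{\vec x} \exp\bigl(\ci \sum_k \lambda_k(\vec x)\, \theta_k\bigr)\, P_{\vec x}$, where $\vec x = (x_1,\ldots,x_n)$, $P_{\vec x} = P_{x_1} \otimes \cdots \otimes P_{x_n}$, and $\lambda_k(\vec x)$ is the number of entries of $\vec x$ equal to $k$ (i.e.\ the type of $\vec x$). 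Conjugating $Y$ produces a double sum over $\vec x, \vec y$ with phase factor $\exp\bigl(\ci \sum_k (\lambda_k(\vec x) - \lambda_k(\vec y))\theta_k\bigr)$. Integrating each $\theta_k$ independently and using $\frac{1}{2\pi}\int_0^{2\pi} \exp(\ci m \theta)\, \di\theta = \indic_{\{m=0\}}$ for integer $m$, exactly the terms with $\lambda_k(\vec x) = \lambda_k(\vec y)$ for all $k$ survive, that is, those where $\vec x$ and $\vec y$ have the same type. Grouping by type $\lambda$ and recalling $P_\lambda = \sum_{\vec x \text{ of type } \lambda} P_{\vec x}$ yields $\sum_{\lambda \in \Lambda_n} P_\lambda Y P_\lambda = \cP_{\kappa,n}(Y)$.

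With \eqref{eq_plan_avg} in hand, the lemma follows by applying it twice. The key structural observation is that in the rotated Petz map \eqref{eq_rotPetzMap} the variable $\vartheta$ enters only through the outermost conjugation by $U_\sigma^\vartheta$ and $\varphi$ only through the innermost conjugation by $U_{\cN(\sigma)}^\varphi$; explicitly,
\begin{align}
(\cT^{\varphi,\vartheta}_{\sigma,\cN})^{\otimes n}(X) = (U_\sigma^\vartheta)^{\otimes n} (\sigma^{\frac12})^{\otimes n} (\cN^\dagger)^{\otimes n} \Bigl[ (\cN(\sigma)^{-\frac12})^{\otimes n} (U_{\cN(\sigma)}^\varphi)^{\otimes n} X (U_{\cN(\sigma)}^{\varphi\dagger})^{\otimes n} (\cN(\sigma)^{-\frac12})^{\otimes n} \Bigr] (\sigma^{\frac12})^{\otimes n} (U_\sigma^{\vartheta\dagger})^{\otimes n} .
\end{align}
Since every remaining factor is independent of $\vartheta$ and $\varphi$, Fubini's theorem (the integrals are over compact domains of continuous functions) lets me carry out the two integrations separately. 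The $\varphi$-integral acts only on $(U_{\cN(\sigma)}^\varphi)^{\otimes n} X (U_{\cN(\sigma)}^{\varphi\dagger})^{\otimes n}$ and, by \eqref{eq_plan_avg} applied to $\kappa = \cN(\sigma)$, replaces it with $\cP_{\cN(\sigma),n}(X)$. The $\vartheta$-integral then acts on the outermost conjugation and, by \eqref{eq_plan_avg} applied to $\kappa = \sigma$, converts it into $\cP_{\sigma,n}(\,\cdot\,)$ applied to everything it encloses.

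Carrying out both integrals gives
\begin{align}
\cP_{\sigma,n}\Bigl( (\sigma^{\frac12})^{\otimes n} (\cN^\dagger)^{\otimes n} \bigl[ (\cN(\sigma)^{-\frac12})^{\otimes n} \cP_{\cN(\sigma),n}(X) (\cN(\sigma)^{-\frac12})^{\otimes n} \bigr] (\sigma^{\frac12})^{\otimes n} \Bigr) \ .
\end{align}
To match this with the definition \eqref{eq_recMap}, I would pull the outer factors $(\sigma^{\frac12})^{\otimes n}$ out of $\cP_{\sigma,n}$. This is legitimate because each type projector $P_\lambda$ commutes with $(\sigma^{\frac12})^{\otimes n}$: all sequences of a fixed type yield the same product of eigenvalues, so $(\sigma^{\frac12})^{\otimes n}$ acts as a single scalar on the range of $P_\lambda$, whence $(\sigma^{\frac12})^{\otimes n} P_\lambda = P_\lambda (\sigma^{\frac12})^{\otimes n}$. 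Consequently $\cP_{\sigma,n}\bigl((\sigma^{\frac12})^{\otimes n} Z (\sigma^{\frac12})^{\otimes n}\bigr) = (\sigma^{\frac12})^{\otimes n} \cP_{\sigma,n}(Z) (\sigma^{\frac12})^{\otimes n}$, and the displayed expression becomes precisely $\cR^n_{\sigma,\cN}(X)$.

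The step that requires the most care is the averaging identity \eqref{eq_plan_avg}, specifically checking that the phase average reproduces the \emph{type}-based pinching $\cP_{\kappa,n}$ rather than a coarser or finer dephasing. This hinges on having exactly one independent phase per distinct eigenvalue and on the integrality of the exponents $\lambda_k(\vec x) - \lambda_k(\vec y)$, which makes the orthogonality of characters cut precisely along type classes; once this is pinned down, the remainder is bookkeeping about which variable touches which conjugation.
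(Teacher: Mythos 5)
Your proof is correct and takes essentially the same route as the paper's: you expand the tensor powers of the commuting unitaries into type projectors and use character orthogonality, $\frac{1}{2\pi}\int_0^{2\pi}\exp(\ci m \theta)\,\di\theta = \indic_{\{m=0\}}$, to show that averaging over each torus reproduces exactly the type-based pinching, applying this once for $\varphi$ (yielding $\cP_{\cN(\sigma),n}$) and once for $\vartheta$ (yielding $\cP_{\sigma,n}$), just as in the paper. Your additional verification that $(\sigma^{\frac{1}{2}})^{\otimes n}$ commutes with the type projectors, needed to match the second expression in~\eqref{eq_recMap}, is a detail the paper absorbs into the definition of $\cR^n_{\sigma,\cN}$ via property~(i) of pinching maps, and you handle it correctly.
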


\begin{proof}
Recall the definition $U_{\cN(\sigma)}^{\vartheta} = \sum_{k \in [d_2]} \exp(\ci \vartheta_k) P_k$.
For each $\mu = (\mu_1, \ldots, \mu_{d_2})$ with $\mu_k \in \mathbb{N}$ and $\sum_{i =1}^{d_2} \mu_{i} = n$, define the corresponding subspaces $P_{\mu}^n := \sum_{x^n} \bigotimes_{m=1}^n P_{x_m}$, where the sum goes over all sequences $x^n = (x_1, \ldots, x_n)$ of type $\mu$.
Using this, we can write 
  \begin{align}
    \bigl( U_{\mathcal{N}(\sigma)}^{\varphi} \bigr)^{\otimes n} 
= \Bigl( \sum_{k \in [d_2]} \exp(\ci \varphi_k) P_k \Bigr)^{\otimes n} 
= \sum_{\mu} \exp\Bigl( \ci \sum_{k \in [d_2]} \varphi_k  \mu_k \Bigr) P_{\mu}^n 
= \sum_{\mu} \exp\big( \ci \, \langle \mu , \varphi \rangle \big) P_{\mu}^n \,,
  \end{align}
  where we used the inner product $\langle \mu , \varphi \rangle := \sum_{k \in [d_2]} \varphi_k  \mu_k$.
  Let us next take a closer look at the inner part of the map $\big(\cT_{\sigma,\cN}^{\varphi,\vartheta}\big)^{\otimes n}$ in~\eqref{eq_rotPetzMap}, i.e., the expression
  \begin{align}
\big( U_{\mathcal{N}(\sigma)}^{\varphi} \big)^{\otimes n} X_{B^n} \big( U_{\mathcal{N}(\sigma)}^{\varphi\dag} \big)^{\otimes n} 
= \sum_{\mu,\mu'} \exp\big( i \langle \mu -\mu' , \varphi \rangle \big) P_{\mu}^n X_{B^n} P_{\mu'}^n \ .
  \end{align}
  This is the only part of~\eqref{eq_rotPetzMap} that depends on $\varphi$. If we integrate $\varphi$ over $[0,2\pi]^{\times d_2}$, we find that
  \begin{align}
  \frac{1}{(2 \pi)^{d_2}}   \int_{[0,2\pi]^{\times d_2}} \di \varphi\ \exp\big( i \langle \mu -\mu' , \varphi \rangle \big) = \indic_{\{ \mu = \mu' \}} \,,
  \end{align}
  and, thus,
  \begin{align}
   \frac{1}{(2 \pi)^{d_2}}  \int_{[0,2\pi]^{\times d_2}} \di \varphi\  \big( U_{\mathcal{N}(\sigma)}^{\varphi} \big)^{\otimes n} X_{B^n} \big( U_{\mathcal{N}(\sigma)}^{\varphi\dag} \big)^{\otimes n}
   = \sum_{\mu} P_{\mu}^n X_{B^n} P_{\mu}^n 
   = \cP_{\cN(\sigma),n} (X_{B^n}) \,.
  \end{align}
  
  The analogous argument applied to the unitary rotations $\big(U_{\sigma}^{\vartheta}\big)^{\otimes n}$ integrated over $\vartheta$ reveals the pinching map $\cP_{\sigma,n} (\cdot)$. Combining these arguments concludes the proof of the lemma.
\end{proof}

%%%%%%%%%%%%%%%%%%%%%%%%%%%%%%%%%% 
 
 \section{Main Result and Proof} \label{sec_results}
 
 Our first result gives a lower bound on the difference between the relative entropy before and after applying a quantum channel. The lower bound is given in terms of the limit of a sequence of relative entropies between $\rho^{\otimes n}$ and the state recovered from $\cN(\rho)^{\otimes n}$ using the pinching recovery map.
 
 \begin{proposition} \label{prop_main}
 For any $\sigma \in \Pos(A)$ and any trace-preserving tensor-stable positive linear map $\cN\in \LL(A,B)$ the pinching recovery map $\cR^n_{\sigma,\cN}$ defined in~\eqref{eq_recMap} satisfies for any $\rho \in \St_{\sigma}(A)$
 \begin{align}
 &D(\rho \, \| \sigma) - D\bigl(\cN(\rho) \big\| \cN(\sigma) \bigr)
  \geq \liminf_{n \to \infty} \frac{1}{n} D\bigl(\rho^{\otimes n} \big\| (\cR^n_{\sigma,\cN} \circ \cN^{\otimes n})(\rho^{\otimes n})\bigr) \ .
 \end{align}
\end{proposition}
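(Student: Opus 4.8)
The plan is to estimate $\frac1n D(\rho^{\otimes n}\|\tau_n)$ directly, where $\omega:=\cN(\rho)$ and $\tau_n:=(\cR^n_{\sigma,\cN}\circ\cN^{\otimes n})(\rho^{\otimes n})=\cR^n_{\sigma,\cN}(\omega^{\otimes n})$, and to show that its $\liminf$ is at most $D(\rho\|\sigma)-D(\cN(\rho)\|\cN(\sigma))$. Since $\log\rho^{\otimes n}$ contributes exactly $n\tr(\rho\log\rho)$, we have $\frac1n D(\rho^{\otimes n}\|\tau_n)=\tr(\rho\log\rho)-\frac1n\tr(\rho^{\otimes n}\log\tau_n)$, so everything reduces to the lower bound
\[
\frac1n\tr\bigl(\rho^{\otimes n}\log\tau_n\bigr)\;\geq\;\tr(\rho\log\sigma)+\tr(\omega\log\omega)-\tr\bigl(\omega\log\cN(\sigma)\bigr)-o(1),
\]
because the right-hand side equals $\tr(\rho\log\rho)-\bigl[D(\rho\|\sigma)-D(\cN(\rho)\|\cN(\sigma))\bigr]$. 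I would produce this bound by peeling off the two reference logarithms $\log\sigma$ and $\log\cN(\sigma)$ \emph{exactly}, paying only the polynomial pinching factors of~\eqref{eq_hayashiPinchingIneq} and a single use of operator concavity of the logarithm.

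First I exploit that the outer pinching forces commutation with $\sigma^{\otimes n}$. Set $\tilde\omega_n:=\cP_{\cN(\sigma),n}(\omega^{\otimes n})$. By property (i), $\tilde\omega_n$ commutes with $\cN(\sigma)^{\otimes n}$, so the inner bracket in~\eqref{eq_recMap} equals the positive operator $W_n:=(\cN(\sigma)^{-1})^{\otimes n}\tilde\omega_n$. Since the factors $(\sigma^{1/2})^{\otimes n}$ already sit outside the pinching in~\eqref{eq_recMap}, we get $\tau_n=(\sigma^{1/2})^{\otimes n}Z_n(\sigma^{1/2})^{\otimes n}$ with $Z_n:=\cP_{\sigma,n}\bigl((\cN^{\dagger})^{\otimes n}(W_n)\bigr)$ commuting with $\sigma^{\otimes n}$ (again property (i)). Hence $\tau_n=\sigma^{\otimes n}Z_n$ and $\log\tau_n=\log\sigma^{\otimes n}+\log Z_n$, giving $\frac1n\tr(\rho^{\otimes n}\log\tau_n)=\tr(\rho\log\sigma)+\frac1n\tr(\rho^{\otimes n}\log Z_n)$ with no loss. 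Applying~\eqref{eq_hayashiPinchingIneq} to the outer pinching, $Z_n\geq|\Lambda_n|^{-1}(\cN^{\dagger})^{\otimes n}(W_n)$, and operator monotonicity of $\log$ yields $\tr(\rho^{\otimes n}\log Z_n)\geq\tr\bigl(\rho^{\otimes n}\log(\cN^{\dagger})^{\otimes n}(W_n)\bigr)-\log|\Lambda_n|$.

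Next I push the logarithm through $\cN^{\dagger}$. As $\cN$ is trace-preserving and tensor-stable positive, $(\cN^{\dagger})^{\otimes n}$ is unital and positive, so the operator Jensen inequality for the operator-concave function $\log$ gives $\log(\cN^{\dagger})^{\otimes n}(W_n)\geq(\cN^{\dagger})^{\otimes n}(\log W_n)$; tracing against $\rho^{\otimes n}$ and using $\bigl((\cN^{\dagger})^{\otimes n}\bigr)^{\dagger}=\cN^{\otimes n}$ converts $\rho^{\otimes n}$ into $\omega^{\otimes n}$, i.e.\ $\tr\bigl(\rho^{\otimes n}\log(\cN^{\dagger})^{\otimes n}(W_n)\bigr)\geq\tr(\omega^{\otimes n}\log W_n)$. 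Because $W_n$ is a commuting product, $\log W_n=-\log\cN(\sigma)^{\otimes n}+\log\tilde\omega_n$, extracting the second reference logarithm exactly via $\tr(\omega^{\otimes n}\log\cN(\sigma)^{\otimes n})=n\tr(\omega\log\cN(\sigma))$. Finally, Hayashi's inequality~\eqref{eq_hayashiPinchingIneq} for the inner pinching, $\tilde\omega_n\geq|\Lambda_n|^{-1}\omega^{\otimes n}$, gives $\tr(\omega^{\otimes n}\log\tilde\omega_n)\geq n\tr(\omega\log\omega)-\log|\Lambda_n|$.

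Collecting these estimates yields precisely the target lower bound with correction terms $O(\tfrac{\log n}{n})$, since each $\log|\Lambda_n|=O(\log n)$ by~\eqref{eq_csiszar}; dividing by $n$ and taking $\liminf_{n\to\infty}$ completes the argument. I expect the main obstacle to be the operator Jensen step for a map that is only tensor-stable positive rather than completely positive: one must ensure $\log(\cN^{\dagger})^{\otimes n}(W_n)\geq(\cN^{\dagger})^{\otimes n}(\log W_n)$ still holds. This is fine because the Davis--Choi--Jensen inequality for operator-concave functions holds for all unital \emph{positive} maps (complete positivity, or $2$-positivity, is only needed for Kadison--Schwarz-type bounds on non-self-adjoint arguments, which do not occur here). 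The remaining care is bookkeeping of supports: one needs $\supp(\rho)\subseteq\supp(\sigma)$, hence $\supp(\omega)\subseteq\supp(\cN(\sigma))$, so that $W_n$ and the commuting splittings $\log(AB)=\log A+\log B$ are well defined, while if $\supp(\rho^{\otimes n})\not\subseteq\supp(\tau_n)$ the left-hand side is $+\infty$ and the claim is trivial.
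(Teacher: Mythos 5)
Your proof is correct and follows essentially the same route as the paper's: the exact extraction of $\log\sigma^{\otimes n}$ and $\log\cN(\sigma)^{\otimes n}$ via the commutation property of pinched operators, Hayashi's pinching inequality combined with operator monotonicity of the logarithm for the two $O(\log n)$ losses, and the operator Jensen inequality for the unital positive map $(\cN^{\dagger})^{\otimes n}$ are precisely the steps of the paper's argument (the Jensen step being the paper's Lemma~\ref{lem_opJensen}). Your observations that mere positivity and unitality suffice for the Jensen step with the operator-concave logarithm, and your support bookkeeping, match the paper's treatment as well.
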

\begin{remark}
Proposition~\ref{prop_main} offers a proof for the monotonicity of the relative entropy based on the concavity and monotonicity of the operator logarithm, the operator Jensen inequality, and the non-negativity of the relative entropy (i.e., Klein's inequality).
\end{remark}

We note that combining Proposition~\ref{prop_main} with a recent result showing that the \emph{fidelity of recovery} is multiplicative~\cite{TB15} directly reproduces the second inequality of~\eqref{eq_FR}. (The fidelity of recovery for a tripartite state $\rho_{ABC} \in \St(A \otimes B \otimes C)$ is defined as $F(A;C|B)_{\rho}:= \max_{\cR \in \TPCP(B,BC)} F(\rho_{ABC},\cR(\rho_{AB}))$ has been introduced in~\cite{SW14} and its properties were studied there and further in~\cite{TB15}.) A stronger result can be obtained from Proposition~\ref{prop_main} by further employing the structure of the pinching recovery map.

\begin{theorem} \label{thm_monoMeasRel}
For any $\sigma \in \Pos(A)$, any $\rho \in \St_{\sigma}(A)$, and any trace-preserving tensor-stable positive linear map $\cN \in \LL(A,B)$ there exists a recovery map $\cR_{\sigma,\cN,\rho} \in \TT_{\sigma,\cN}$ with $\TT_{\sigma,\cN}$ given in~\eqref{eq_convexHullRotatedPetz}, such that
\begin{align} \label{eq_generalizedMarkBound}
&D(\rho \| \sigma) - D\bigl(\cN(\rho) \big\| \cN(\sigma) \bigr)  
\geq \MD\bigl(\rho \, \big\| (\cR_{\sigma,\cN,\rho} \circ \cN)(\rho)\bigr) 
\geq -2 \log  F\bigl(\rho , (\cR_{\sigma,\cN,\rho} \circ \cN)(\rho)\bigr) \ .
\end{align}
\end{theorem}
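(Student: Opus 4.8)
The plan is to convert the regularized relative entropy supplied by Proposition~\ref{prop_main} into a single-letter measured relative entropy and then to extract one recovery map from the compact convex set $\TT_{\sigma,\cN}$ by a minimax argument. The starting point is to combine Proposition~\ref{prop_main} with Lemma~\ref{lem_pinchingVsRotPetz}. Writing $\di\mu(\varphi,\vartheta)$ for the normalized uniform measure on $[0,2\pi]^{\times d_2}\times[0,2\pi]^{\times d_1}$ and using $(\cT^{\varphi,\vartheta}_{\sigma,\cN})^{\otimes n}\circ \cN^{\otimes n} = (\cT^{\varphi,\vartheta}_{\sigma,\cN}\circ\cN)^{\otimes n}$, the recovered state factorizes as
\begin{align}
(\cR^n_{\sigma,\cN}\circ\cN^{\otimes n})(\rho^{\otimes n}) = \int \di\mu(\varphi,\vartheta)\, \omega_{\varphi,\vartheta}^{\otimes n}, \qquad \omega_{\varphi,\vartheta} := (\cT^{\varphi,\vartheta}_{\sigma,\cN}\circ\cN)(\rho) \ .
\end{align}
Thus the object to be lower bounded, $\liminf_n \tfrac1n D\bigl(\rho^{\otimes n}\,\big\|\,\int \di\mu\,\omega_{\varphi,\vartheta}^{\otimes n}\bigr)$, is the regularized relative entropy between an i.i.d.\ state and a continuous mixture of i.i.d.\ states.

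The key tool is the variational formula for the measured relative entropy, $\MD(\rho\|\omega) = \sup_{W>0}\bigl[\tr(\rho\log W) - \log\tr(\omega W)\bigr]$, together with $D\geq\MD$. For a fixed single-system operator $W>0$ I would test the $n$-fold relative entropy against the product operator $W^{\otimes n}$. Since $\tr(\rho^{\otimes n}\log W^{\otimes n}) = n\,\tr(\rho\log W)$ and $\tr(\omega_{\varphi,\vartheta}^{\otimes n} W^{\otimes n}) = \bigl(\tr(\omega_{\varphi,\vartheta}W)\bigr)^{n}$, the product structure of the mixture gives, for every $n$,
\begin{align}
\tfrac{1}{n} D\bigl(\rho^{\otimes n}\,\big\|\,(\cR^n_{\sigma,\cN}\circ\cN^{\otimes n})(\rho^{\otimes n})\bigr)
\geq \tr(\rho\log W) - \tfrac{1}{n}\log\!\int\!\di\mu\,\bigl(\tr(\omega_{\varphi,\vartheta}W)\bigr)^{n}
\geq \tr(\rho\log W) - \sup_{\varphi,\vartheta}\log\tr(\omega_{\varphi,\vartheta}W) \ ,
\end{align}
the last step bounding the average by its supremum. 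Setting $g(W,\cR) := \tr(\rho\log W) - \log\tr\bigl((\cR\circ\cN)(\rho)\,W\bigr)$ and using that $\cR \mapsto (\cR\circ\cN)(\rho)$ is affine, so the supremum over rotated Petz maps equals the supremum over $\TT_{\sigma,\cN}$, this lower bound is exactly $\inf_{\cR \in \TT_{\sigma,\cN}} g(W,\cR)$. Feeding it into Proposition~\ref{prop_main} yields $D(\rho\|\sigma) - D(\cN(\rho)\|\cN(\sigma)) \geq \sup_{W>0}\inf_{\cR \in \TT_{\sigma,\cN}} g(W,\cR)$.

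It then remains to exchange the supremum and infimum. After the substitution $W = e^{L}$ with $L$ Hermitian, the objective $g(e^{L},\cR) = \tr(\rho L) - \log\tr\bigl((\cR\circ\cN)(\rho)\,e^{L}\bigr)$ is concave in $L$ (the log-partition function $L \mapsto \log\tr(\omega\, e^{L})$ is convex) and convex in $\cR$ (since $(\cR\circ\cN)(\rho)$ is affine in $\cR$ and $-\log$ of an affine map is convex); moreover $\TT_{\sigma,\cN}$ is convex and compact. Since $\sup_{W>0}=\sup_{L \text{ Herm}}$, Sion's minimax theorem applies and gives
\begin{align}
\sup_{W>0}\inf_{\cR \in \TT_{\sigma,\cN}} g(W,\cR)
= \inf_{\cR \in \TT_{\sigma,\cN}}\sup_{W>0} g(W,\cR)
= \inf_{\cR \in \TT_{\sigma,\cN}} \MD\bigl(\rho\,\big\|\,(\cR\circ\cN)(\rho)\bigr) \ .
\end{align}
Compactness of $\TT_{\sigma,\cN}$ and lower semicontinuity of $\cR \mapsto \MD(\rho\|(\cR\circ\cN)(\rho))$ ensure the infimum is attained at some $\cR_{\sigma,\cN,\rho} \in \TT_{\sigma,\cN}$, establishing the first inequality of~\eqref{eq_generalizedMarkBound}. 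The second inequality follows, exactly as in the discussion of~\eqref{eq_FR}, from the monotonicity of the R\'enyi divergence in its order parameter and the existence of an optimal measurement that does not increase the fidelity; finally $(\cR_{\sigma,\cN,\rho}\circ\cN)(\sigma) = \sigma$ because every rotated Petz map reconstructs $\sigma$ from $\cN(\sigma)$ (the unitaries $U_{\sigma}^{\vartheta}$, $U_{\cN(\sigma)}^{\varphi}$ commute with $\sigma$ and $\cN(\sigma)$), and this is preserved under convex combinations.

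The main obstacle is the minimax exchange. The more direct route of first applying a fixed measurement $\cM$ to both states and passing to the classical relative entropy yields only $\sup_{\cM}\inf_{\cR} D\bigl(\cM(\rho)\,\|\,\cM((\cR\circ\cN)(\rho))\bigr)$; this payoff is jointly convex in $\cM$ and in $\cR$, so it has no saddle point and no interchange is possible. Routing through the variational formula for $\MD$ and the substitution $W = e^{L}$ is precisely what restores the concave--convex structure required by Sion's theorem. I expect the technical work to lie in verifying this structure (in particular the convexity of $L \mapsto \log\tr(\omega\, e^{L})$) and in controlling the Laplace-type estimate $\int \di\mu\,(\tr(\omega_{\varphi,\vartheta}W))^{n} \leq (\sup_{\varphi,\vartheta}\tr(\omega_{\varphi,\vartheta}W))^{n}$ uniformly in $n$; extracting a single recovery map is then a routine compactness argument.
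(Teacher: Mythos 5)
Your overall architecture coincides with the paper's proof: Proposition~\ref{prop_main} combined with Lemma~\ref{lem_pinchingVsRotPetz}, then the variational formula~\eqref{eq_variationalMeasRelEnt} tested on product operators $W^{\otimes n}$, the Laplace-type estimate $\int\di\mu\,\bigl(\tr(\omega_{\varphi,\vartheta}W)\bigr)^{n}\leq\bigl(\sup_{\varphi,\vartheta}\tr(\omega_{\varphi,\vartheta}W)\bigr)^{n}$ (which is immediate and uniform in $n$, since $(\varphi,\vartheta)\mapsto\tr(\omega_{\varphi,\vartheta}W)$ is continuous on a compact set), and a Sion minimax exchange over the compact convex set of candidate recovered states; this is precisely the content of the paper's Lemma~\ref{lem_singleLetterMeasRelEnt}. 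Up to the bound $D(\rho\|\sigma)-D(\cN(\rho)\|\cN(\sigma))\geq\sup_{W>0}\inf_{\cR\in\TT_{\sigma,\cN}}g(W,\cR)$ your chain of inequalities is correct, as are the closing steps (attainment by compactness, the fidelity bound, and $(\cR\circ\cN)(\sigma)=\sigma$).

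The genuine gap is the justification of the minimax exchange. You invoke Sion's theorem on the strength of the claim that, after substituting $W=e^{L}$, the map $L\mapsto\log\tr(\omega e^{L})$ is convex. This is true in the commutative case, but \emph{false} for matrices: only $L\mapsto\log\tr\, e^{\log\omega+L}$ is convex (composition of the convex log-partition function with an affine map), and Golden--Thompson gives $\tr(\omega e^{L})\geq\tr\, e^{\log\omega+L}$, which points the wrong way. Concretely, take $\omega=\proj{0}$ on a qubit and $H=\bigl(\begin{smallmatrix} b & a\\ a & -b\end{smallmatrix}\bigr)$, so that $\tr(\omega e^{H})=\cosh r+\frac{b}{r}\sinh r$ with $r=\sqrt{a^{2}+b^{2}}$. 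The Hessian of $\phi(a,b):=\log\tr(\omega e^{H})$ at $(a,b)=(1,0)$ is
\begin{align}
\begin{pmatrix} 1-\tanh^{2}1 & \frac{e^{-1}}{\cosh 1}-\tanh^{2}1 \\[1mm] \frac{e^{-1}}{\cosh 1}-\tanh^{2}1 & \tanh 1-\tanh^{2}1 \end{pmatrix}
\approx \begin{pmatrix} 0.420 & -0.342 \\ -0.342 & 0.182 \end{pmatrix} ,
\end{align}
whose determinant is $\approx-0.040<0$, so $\phi$ is indefinite there; e.g.\ along $(a,b)=(1+0.81t,\,t)$ the midpoint value at $t=0$ strictly exceeds the average of the values at $t=\pm 0.3$ (and the violation survives perturbing $\omega$ to full rank). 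Hence the payoff is not concave in $L$, quasi-concavity is unsubstantiated, and Sion's theorem does not apply as you have set it up. The repair is exactly the device the paper uses in Lemma~\ref{lem_singleLetterMeasRelEnt}: before exchanging, weaken $-\log\tr(\sigma\omega)\geq 1-\tr(\sigma\omega)$ (from $\log x\leq x-1$). The resulting payoff $\tr(\rho\log\omega)+1-\tr(\sigma\omega)$ is concave in $\omega$ by operator concavity of the logarithm \emph{without any reparametrization}, and affine in $\sigma$, so Sion applies with the compact convex set $\conv\{\sigma_x\}$; crucially nothing is lost, because the second form of~\eqref{eq_variationalMeasRelEnt} shows that the supremum over $\omega$ of the weakened payoff still equals $\MD(\rho\|\sigma)$ exactly. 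With that single substitution your proof becomes the paper's proof.
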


\begin{remark}
The recovery map stated in Theorem~\ref{thm_monoMeasRel} has the property that it perfectly recovers $\sigma$ from $\cN(\sigma)$, i.e., $(\cR_{\sigma,\cN,\rho} \circ \cN)(\sigma) = \sigma$. This follows directly by the fact that the unitaries $U_{\sigma}^{\vartheta}$ and $U_{\cN(\sigma)}^{\varphi}$ (defined in Section~\ref{sec_variantsRecMaps}) commute with $\sigma$ and $\cN(\sigma)$, respectively, for any $\vartheta$ and any $\varphi$.
 We further note that the recovery map predicted by Theorem~\ref{thm_monoMeasRel} has a potential dependence on $\rho$ as the weight of the convex sum in the definition of $\TT_{\sigma,\cN}$ (see Equation~\eqref{eq_convexHullRotatedPetz}) could depend on $\rho$. In other words, the recovery map predicted by Theorem~\ref{thm_monoMeasRel} does not possess the universality property that has been established in~\cite{SFR15} for the conditional mutual information lower bound.
\end{remark}

\begin{remark}
Our new bound in terms of measured relative entropy in~\eqref{eq_generalizedMarkBound} is always strictly stronger than the bound in terms of the fidelity, except for the case of perfect recoverability where both bounds vanish.\footnote{To verify this, note that $F(p, q) = F(\rho , \sigma)$ for some measurement and post-measurement distributions $p$ and $q$ corresponding to $\rho$ and $\sigma$, respectively. Hence, $F(\rho,\sigma) \neq 1$ implies that $p \neq q$, and the strict monotonicity of the (commutative) R\'enyi divergence ensures that $D_{\mathbb{M}}(\rho\|\sigma) \geq D(p\|q) > D_{\nicefrac12}(p\|q) = -\log F(p,q)$.}
Moreover, as discussed in~\cite{BHOS14}, the gap between the two bounds can be of order $\log d$ where $d$ is the dimension of the quantum system. Finally, in the commutative case our bound coincides with the best known classical result.
\end{remark}

\begin{remark}We further note that Theorem~\ref{thm_monoMeasRel} holds for any trace-preserving tensor-stable positive linear map $\cN\in \LL(A,B)$ which is more general than only allowing for trace-preserving completely positive maps~\cite{MRW15}.
\end{remark}

By choosing $\rho = \rho_{ABC}$, $\sigma=\id_A \otimes \rho_{BC}$, and $\cN(\cdot)=\tr_C(\cdot)$, Theorem~\ref{thm_monoMeasRel} immediately reproduces the lower bound for the conditional mutual information given in~\eqref{eq_FR}. In addition, we obtain additional information about the structure of the recovery map such as that it maps $\rho_B$ to $\rho_{BC}$. 
\begin{corollary}
For any $\rho_{ABC} \in \St(A \otimes B \otimes C)$ there exists a recovery map $\cR_{B \to BC} \in \TT_{\rho_{BC},\tr_C}$ with $\TT_{\rho_{BC},\tr_C}$ given in~\eqref{eq_convexHullRotatedPetz}, such that 
\begin{align}
I(A:C|B)_{\rho} \geq \MD\bigl(\rho_{ABC} \big\|  \cR_{B \to BC}(\rho_{AB}) \bigr)
\geq - 2 \log F\bigl(\rho_{ABC},\cR_{B \to BC}(\rho_{AB}) \bigr) \ .
\end{align}
\end{corollary}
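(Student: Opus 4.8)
The plan is to derive the corollary as a direct instantiation of Theorem~\ref{thm_monoMeasRel} under the substitutions $\rho = \rho_{ABC}$, $\sigma = \id_A \otimes \rho_{BC} \in \Pos(A\otimes B\otimes C)$, and $\cN = \tr_C$, where the composite system $A\otimes B\otimes C$ plays the role of the theorem's input space and $A\otimes B$ that of its output space. First I would check that the hypotheses hold: $\tr_C$ is trace-preserving and completely positive, hence tensor-stable positive; the operator $\sigma$ need not be normalized since the theorem permits $\sigma \in \Pos$; and the support condition $\supp(\rho_{ABC}) \subseteq \supp(\id_A\otimes\rho_{BC})$ holds automatically. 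Indeed, since $\rho_{BC} = \tr_A(\rho_{ABC})$, any $\ket{\phi}_{BC}$ orthogonal to $\supp(\rho_{BC})$ satisfies $\tr\bigl(\rho_{ABC}(\id_A\otimes\proj{\phi})\bigr) = \tr(\rho_{BC}\proj{\phi}) = 0$, and non-negativity of $\rho_{ABC}$ then forces $\supp(\rho_{ABC}) \subseteq \id_A\otimes\supp(\rho_{BC})$. Thus Theorem~\ref{thm_monoMeasRel} supplies a recovery map $\cR_{\sigma,\cN,\rho}\in\TT_{\sigma,\cN}$ satisfying~\eqref{eq_generalizedMarkBound}.

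The second step is to identify the left-hand side of~\eqref{eq_generalizedMarkBound} with the conditional mutual information. Using $\log(\id_A\otimes\rho_{BC}) = \id_A\otimes\log\rho_{BC}$, together with $\cN(\rho) = \rho_{AB}$ and $\cN(\sigma) = \tr_C(\id_A\otimes\rho_{BC}) = \id_A\otimes\rho_B$, a direct computation gives
\begin{align}
D(\rho_{ABC}\,\|\,\id_A\otimes\rho_{BC}) &= -H(ABC) + H(BC), \\
D(\rho_{AB}\,\|\,\id_A\otimes\rho_B) &= -H(AB) + H(B).
\end{align}
Subtracting yields exactly $H(AB)+H(BC)-H(ABC)-H(B) = I(A:C|B)_\rho$, matching the definition from Section~\ref{sec_intro}. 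This turns the abstract relative-entropy difference into the target entropic quantity.

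The remaining—and only mildly delicate—step is to verify the claimed tensor structure of the recovery map, namely that $\cR_{\sigma,\cN,\rho}$ acts trivially on $A$ and can be written as $\id_A\otimes\cR_{B\to BC}$ with $\cR_{B\to BC}\in\TT_{\rho_{BC},\tr_C}$. For this I would unpack the rotated Petz maps~\eqref{eq_rotPetzMap} generating $\TT_{\sigma,\cN}$. Because $\sigma = \id_A\otimes\rho_{BC}$, its spectral projectors are $\id_A$ tensored with those of $\rho_{BC}$, so $U_\sigma^\vartheta = \id_A\otimes V_{BC}^\vartheta$; likewise $\cN(\sigma) = \id_A\otimes\rho_B$ gives $U_{\cN(\sigma)}^\varphi = \id_A\otimes W_B^\varphi$. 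The adjoint of the partial trace is $\cN^\dagger(\cdot) = (\cdot)\otimes\id_C$, and $\sigma^{\frac12} = \id_A\otimes\rho_{BC}^{\frac12}$, $\cN(\sigma)^{-\frac12} = \id_A\otimes\rho_B^{-\frac12}$. Substituting into~\eqref{eq_rotPetzMap} shows that every generator factorizes as $\id_A\otimes\cT_{\rho_{BC},\tr_C}^{\varphi,\vartheta}$, the second factor being a rotated Petz map for $\sigma=\rho_{BC}$ and channel $\tr_C:BC\to B$; taking the convex combination from~\eqref{eq_convexHullRotatedPetz} preserves this form and places $\cR_{B\to BC}$ in $\TT_{\rho_{BC},\tr_C}$.

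Consequently $(\cR_{\sigma,\cN,\rho}\circ\cN)(\rho) = (\id_A\otimes\cR_{B\to BC})(\rho_{AB})$, abbreviated $\cR_{B\to BC}(\rho_{AB})$, so the measured relative entropy and fidelity terms in~\eqref{eq_generalizedMarkBound} become $\MD\bigl(\rho_{ABC}\,\|\,\cR_{B\to BC}(\rho_{AB})\bigr)$ and $-2\log F\bigl(\rho_{ABC},\cR_{B\to BC}(\rho_{AB})\bigr)$, completing the corollary. The only obstacle worth flagging is this bookkeeping of the spectator system $A$: one must confirm that $A$ genuinely decouples through each constituent operator rather than merely through their product, which holds precisely because every operator appearing in~\eqref{eq_rotPetzMap} is itself of the form $\id_A\otimes(\cdot)$ under the present choice of $\sigma$ and $\cN$.
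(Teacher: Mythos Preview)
Your proposal is correct and follows exactly the paper's approach: the paper states the corollary as an immediate consequence of Theorem~\ref{thm_monoMeasRel} under the substitutions $\rho=\rho_{ABC}$, $\sigma=\id_A\otimes\rho_{BC}$, and $\cN=\tr_C$, and you simply spell out the details (support condition, identification of the relative-entropy difference with $I(A:C|B)$, and factorization of the rotated Petz maps as $\id_A\otimes\cT_{\rho_{BC},\tr_C}^{\varphi,\vartheta}$) that the paper leaves implicit.
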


Whereas Theorem~\ref{thm_monoMeasRel} provides an lower bound for the relative entropy difference in terms of a distance to a recovered state it is of interest if there also exists an upper bound for the relative entropy in terms of recoverability. The following proposition proves such an upper bound in terms of the \emph{max relative entropy} that is defined as $D_{\max}(\rho \| \sigma):= \inf\{\gamma : \rho \leq 2^{\gamma} \sigma \}$ for two density operators $\rho$ and $\sigma$. 
\begin{proposition} \label{prop_UB}
For any $\sigma \in \Pos(A)$, any $\rho \in \St_{\sigma}(A)$, and any trace-preserving tensor-stable positive linear map $\cN \in \LL(A,B)$ we have
\begin{align}
D(\rho \| \sigma) - D\bigl(\cN(\rho) \big\| \cN(\sigma) \bigr) 
\leq \min_{\cR \in \mathrm{TPTSP}(B,A)}\{D_{\max}\bigl((\cR \circ \cN)(\sigma)\| \sigma \bigr)  :   (\cR \circ \cN)(\rho)=\rho \} \ ,
\end{align}
where $ \mathrm{TPTSP}(B,A)$ denotes the set of trace-preserving tensor-stable positive linear maps from $A$ to $B$.
\end{proposition}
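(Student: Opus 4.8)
The plan is to establish, for every fixed recovery map $\cR \in \mathrm{TPTSP}(B,A)$ satisfying the constraint $(\cR \circ \cN)(\rho) = \rho$, the inequality $D(\rho\|\sigma) - D(\cN(\rho)\|\cN(\sigma)) \leq D_{\max}\bigl((\cR\circ\cN)(\sigma)\big\|\sigma\bigr)$, and then to take the infimum over all such $\cR$. (If no feasible $\cR$ exists, the right-hand side is $+\infty$ and there is nothing to prove.) Fix such an $\cR$ and abbreviate $\tilde\sigma := (\cR\circ\cN)(\sigma)$. Since $\cN$ and $\cR$ are both trace-preserving tensor-stable positive, the data-processing inequality applies to each of them, as recalled in the introduction. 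Applying it first to $\cN$ and then to $\cR$, and using the recovery constraint, gives
\begin{align}
D(\rho\|\sigma) \geq D\bigl(\cN(\rho)\big\|\cN(\sigma)\bigr) \geq D\bigl((\cR\circ\cN)(\rho)\big\|(\cR\circ\cN)(\sigma)\bigr) = D(\rho\|\tilde\sigma) \ .
\end{align}
Comparing the outermost expressions and expanding the definition of the relative entropy,
\begin{align}
D(\rho\|\sigma) - D\bigl(\cN(\rho)\big\|\cN(\sigma)\bigr) \leq D(\rho\|\sigma) - D(\rho\|\tilde\sigma) = \tr\bigl(\rho(\log\tilde\sigma - \log\sigma)\bigr) \ ,
\end{align}
so it suffices to bound this trace by $\gamma := D_{\max}(\tilde\sigma\|\sigma)$, which we may assume is finite.

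Before invoking the order relation I would record the relevant support inclusions, which also guarantee that all relative entropies above are finite. Since $\supp(\rho)\subseteq\supp(\sigma)$ and $\sigma$ dominates a positive multiple of the projector onto its support, there is a constant $c>0$ with $\rho \leq c\,\sigma$. Applying the positive map $\cR\circ\cN$ and using $(\cR\circ\cN)(\rho)=\rho$ yields $\rho \leq c\,\tilde\sigma$, whence $\supp(\rho)\subseteq\supp(\tilde\sigma)\subseteq\supp(\sigma)$; the last inclusion follows from $\tilde\sigma \leq 2^\gamma\sigma$, where the defining infimum of $D_{\max}$ is attained because $\{\gamma : \tilde\sigma \leq 2^\gamma\sigma\}$ is closed.

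The key and final step is to push the operator inequality $\tilde\sigma \leq 2^\gamma \sigma$ through the operator logarithm. Morally, operator monotonicity of $\log$ gives $\log\tilde\sigma \leq \log(2^\gamma\sigma) = \gamma\,\id + \log\sigma$, and pairing with $\rho$ produces $\tr(\rho(\log\tilde\sigma-\log\sigma)) \leq \gamma\,\tr(\rho) = \gamma$, exactly the desired bound. I expect the main obstacle to be that $\tilde\sigma$ need not have full rank, so $\log$ cannot be applied directly and operator monotonicity is not literally available on its zero eigenspace. I would resolve this by regularizing: for $\eps>0$ both $\tilde\sigma+\eps\,\id$ and $2^\gamma\sigma+\eps\,\id$ are positive definite and still satisfy $\tilde\sigma+\eps\,\id \leq 2^\gamma\sigma+\eps\,\id$, so operator monotonicity of $\log$ yields $\tr(\rho\log(\tilde\sigma+\eps\,\id)) \leq \tr(\rho\log(2^\gamma\sigma+\eps\,\id))$. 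Letting $\eps\to 0$, the two sides converge to $\tr(\rho\log\tilde\sigma)$ and $\gamma+\tr(\rho\log\sigma)$ respectively; the limit is well behaved precisely because $\supp(\rho)\subseteq\supp(\tilde\sigma)\subseteq\supp(\sigma)$, so the divergent $\log\eps$ contributions from the zero eigenspaces are multiplied by vanishing weight of $\rho$. This establishes the bound for the fixed $\cR$, and taking the infimum over all feasible recovery maps completes the proof.
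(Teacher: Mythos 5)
Your proof is correct and follows essentially the same route as the paper's: fix a feasible $\cR$, use the constraint $(\cR\circ\cN)(\rho)=\rho$ together with the data-processing inequality applied to $\cR$ to reduce to comparing $D(\rho\|\sigma)$ with $D\bigl(\rho\,\big\|\,(\cR\circ\cN)(\sigma)\bigr)$, and convert $(\cR\circ\cN)(\sigma)\leq 2^{\gamma}\sigma$ into the entropy bound via operator monotonicity of the logarithm. Your support analysis and $\eps$-regularization merely make rigorous a step the paper's proof leaves implicit (and correctly so, since $\tr(\rho P)=0$ for the kernel projectors involved, the divergent $\log\eps$ terms vanish identically).
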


\begin{remark}
A conceptually different upper bound for the relative entropy difference has been established in~\cite{Wilde15}. We note that the bound in~\cite{Wilde15} is expressed in terms of the max relative entropy as well but involves a maximization over recovery maps of a particular form. In contrast, our upper bound is expressed in terms of a minimization over recovery maps and thus can be further bounded by any choice of such a recovery map.
\end{remark}

%%%%%%%%%%%%%%%%%%%%%%%%%%%%%%%%%% 
\subsection{Proof of Proposition~\ref{prop_main}}

The proof of Proposition~\ref{prop_main} uses basic properties of pinching maps (see Section~\ref{sec_variantsRecMaps}) and the operator monotonicity and concavity of the logarithm.
We start with a preparatory lemma that will be used in the proof of Proposition~\ref{prop_main}. It is a simple consequence of the operator concavity of the logarithm.

\begin{lemma} \label{lem_opJensen}
For any $n \in \N$, any $\rho^n \in \Pos(A^{\otimes n})$, any $\sigma^n \in \Pos(B^{\otimes n})$, any trace-preserving $n$-tensor-stable positive linear map $\cN \in \LL(A,B)$, we have
\begin{align}
\tr\bigl(\cN^{\otimes n}(\rho^n) \log \sigma^n \bigr)  \leq \tr\bigl( \rho^n \log \, \cN^{\dagger\,\otimes n}(\sigma^n) \bigr) \ .
\end{align}
\end{lemma}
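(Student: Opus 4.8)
The plan is to pass to the Hilbert--Schmidt adjoint and reduce the claimed trace inequality to a single operator inequality expressing the operator concavity of the logarithm. First I would rewrite the left-hand side using the definition of the adjoint together with the factorization $(\cN^{\otimes n})^{\dagger} = (\cN^{\dagger})^{\otimes n}$, giving
\[
\tr\bigl(\cN^{\otimes n}(\rho^n)\log\sigma^n\bigr) = \tr\bigl(\rho^n\,(\cN^{\dagger})^{\otimes n}(\log\sigma^n)\bigr) \ .
\]
Since $\rho^n \ge 0$, it then suffices to prove the operator inequality
\[
(\cN^{\dagger})^{\otimes n}(\log\sigma^n) \le \log\bigl((\cN^{\dagger})^{\otimes n}(\sigma^n)\bigr) \ ,
\]
because multiplying by $\rho^n \geq 0$ and taking the trace preserves it.

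Write $\Psi := (\cN^{\dagger})^{\otimes n}$. Two structural facts about $\Psi$ are what make the argument run. First, $\Psi$ is positive: it is the adjoint of $\cN^{\otimes n}$, which is positive because $\cN$ is $n$-tensor-stable positive, and the adjoint of a positive map is positive. Second, $\Psi$ is unital: $\cN^{\otimes n}$ is trace-preserving, trace preservation is equivalent to unitality of the adjoint, and hence $\Psi(\id) = \bigl(\cN^{\dagger}(\id)\bigr)^{\otimes n} = \id$. With these two facts, the displayed operator inequality is exactly the operator Jensen (Davis--Choi) inequality applied to the operator-concave function $\log$ and the positive unital map $\Psi$, namely $\Psi(\log\sigma^n) \le \log\bigl(\Psi(\sigma^n)\bigr)$.

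If a self-contained derivation is preferred over citing operator Jensen directly, I would invoke the integral representation $\log x = \int_0^\infty\bigl(\tfrac{1}{1+t}-\tfrac{1}{x+t}\bigr)\di t$. Using unitality to write $\Psi(\sigma^n)+t = \Psi(\sigma^n + t\id)$, this reduces the claim, for each $t\ge 0$, to the single operator inequality
\[
\Psi\bigl((\sigma^n+t\,\id)^{-1}\bigr) \ge \bigl(\Psi(\sigma^n+t\,\id)\bigr)^{-1} \ ,
\]
i.e.\ to the operator convexity of $x \mapsto x^{-1}$ for positive unital maps; integrating this back over $t$ recovers the logarithm inequality.

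The main point to be careful about---and where the hypotheses really enter---is that $\cN^{\otimes n}$ is assumed only to be (tensor-stable) positive, not completely positive, so $\Psi$ is merely positive and unital. The customary block-matrix proof of the Jensen inequality uses $2$-positivity, so I would instead rely on the version of the Davis--Choi--Jensen inequality valid for \emph{all} positive unital maps when the function is operator convex/concave, which applies to $\log$ by L\"owner's theorem. A secondary, routine technical point is the possible non-invertibility of $\sigma^n$: I would first establish the inequality for $\sigma^n + \eps\,\id > 0$, where $\log$ is unambiguous, and then let $\eps \downarrow 0$, using operator monotonicity and continuity of the logarithm (together with the support convention on the relative entropy) to obtain the general case.
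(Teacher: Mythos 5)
Your proof is correct and takes essentially the same route as the paper's: pass to the Hilbert--Schmidt adjoint, observe that $(\cN^{\dagger})^{\otimes n}$ is positive and unital precisely because $\cN$ is trace-preserving and $n$-tensor-stable positive, and apply the operator Jensen inequality for positive unital maps (the Davis--Choi version, which is exactly the reference the paper cites) to the operator-concave logarithm. Your added care about the failure of the $2$-positivity-based block-matrix proof and the $\eps$-regularization for non-invertible $\sigma^n$ are sound refinements of, not deviations from, the paper's argument.
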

\begin{proof}
Since $\cN$ is trace-preserving and $n$-tensor-stable positive this implies that $\cN^\dagger$ is unital and $n$-tensor-stable positive. Thus by the operator Jensen inequality for positive unital maps proven in~\cite[Theorem~2.1]{choi74} (see also~\cite{davis57,hansen03} and~\cite[Footnote~1]{hansen07}) and the concavity of the operator logarithm we obtain for any $n \in \N$
\begin{align}
\tr\bigl(\cN^{\otimes n}(\rho^n) \log \sigma^n \bigr)
=\tr\bigl(\rho^n \cN^{\dagger \, \otimes n}( \log \sigma^n) \bigr) 
\leq \tr\bigl(\rho^n \log \cN^{\dagger \, \otimes n}(\sigma^n) \bigr)  \ .
\end{align}
%Let $\{N_i\}_{i \in \II}$ with $|\II | \leq \dim(A) \dim(B)$ be the Kraus operators of $\cN$. By Jensen's operator inequality~\cite{hansen03} and the concavity of the operator logarithm (see, e.g.,~\cite[Theorem~4.2.3]{bhatia_psd_book}), we find
%\begin{align}
%\tr\bigl(\cN(\rho) \log \sigma \bigr) 
%= \tr\Bigl(\sum_{i \in \II} N_i \rho N_i^{\dagger} \log \sigma \Bigr)
%&= \tr\Bigl(\rho \sum_{i \in \II} N_i^{\dagger} \log(\sigma) N_i \Bigr) \\
%&\leq \tr\left(\rho \log \biggl(\sum_{i \in \II} N_i^{\dagger} \sigma N_i \biggr) \right) 
%= \tr \bigl( \rho \log \cN^{\dagger}(\sigma) \bigr) \ ,
%\end{align}
%which proves the assertion.
\end{proof}

\begin{proof}[Proof of Proposition~\ref{prop_main}]
 First, note that $\cP_{H}(X)$ commutes with $H$ for any $X \in \Pos(A)$, and thus
 \begin{align*}
 \log \bigl( H^{\frac12} \cP_{H}(X) H^{\frac12} \bigr) = \log \cP_{H}(X) + \log H \ .
 \end{align*}
 Using this fact, we find for any $n \in \N$ that
\begin{align} \label{eq_firstPart}
 &D\bigl(\rho^{\otimes n} \big\| (\cR^n_{\sigma,\cN} \circ \cN^{\otimes n})(\rho^{\otimes n})\bigr) \nonumber \\
 &\hspace{4mm}=\! -n H(\rho)\! -\! \tr\Big(\!\rho^{\otimes n} \! \log\! \Big[\!  (\sigma^{\frac{1}{2}})^{\otimes n} \cP_{\sigma,n}\big(\!(\cN^{\dagger})^{\otimes n}\! \Bigl[\! \bigl(\cN(\sigma)^{-\frac{1}{2}}\bigr)^{\!\otimes n}  \cP_{\cN(\sigma),n} (\cN(\rho)^{\otimes n}) \bigl(\cN(\sigma)^{-\frac{1}{2}}\bigr)^{\!\otimes n}  \Bigr]  \big)  (\sigma^{\frac{1}{2}})^{\otimes n} \Big] \Big) \nonumber \\
 &\hspace{4mm}= n D(\rho \| \sigma) - \tr\Big(\rho^{\otimes n} \log \cP_{\sigma,n}\Big((\cN^{\dagger})^{\otimes n} \Big[ \big(\cN(\sigma)^{-\frac{1}{2}}\big)^{\otimes n} \cP_{\cN(\sigma),n} (\cN(\rho)^{\otimes n}) \bigl(\cN(\sigma)^{-\frac{1}{2}}\bigr)^{\otimes n}  \Big]  \Big)\Big) \nonumber \\
 &\hspace{4mm}\leq n D(\rho \| \sigma) - \tr\Big(\rho^{\otimes n} \log (\cN^{\dagger})^{\otimes n} \Big[ \big(\cN(\sigma)^{-\frac{1}{2}}\big)^{\otimes n} \cP_{\cN(\sigma),n} (\cN(\rho)^{\otimes n}) \big(\cN(\sigma)^{-\frac{1}{2}}\big)^{\otimes n}  \Big]  \Big) + O(\log n) \ .
\end{align}
In the last step we used the pinching inequality together with~\eqref{eq_csiszar} and the fact $\tr(X \log Y) \geq \tr(X \log Z)$ for $X\geq 0$ and $Y \geq Z$ which is due to the monotonicity of the operator logarithm~\cite[Exercise~4.2.5]{bhatia_psd_book}.
Next, Lemma~\ref{lem_opJensen} ensures that we have
\begin{align} \label{eq_secPart}
 &\tr\Big(\rho^{\otimes n} \log (\cN^{\dagger})^{\otimes n} \Big[ \big(\cN(\sigma)^{-\frac{1}{2}}\big)^{\otimes n} \cP_{\cN(\sigma),n} (\cN(\rho)^{\otimes n})  \big(\cN(\sigma)^{-\frac{1}{2}}\big)^{\otimes n}  \Bigr]  \Big) \nonumber  \\
  & \hspace{40mm}\geq  \tr\Big( \cN(\rho)^{\otimes n} \log \Big[  \big(\cN(\sigma)^{-\frac{1}{2}}\big)^{\otimes n} \cP_{\cN(\sigma),n} (\cN(\rho)^{\otimes n})\big(\cN(\sigma)^{-\frac{1}{2}}\big)^{\otimes n} \Big]  \Big) \ .
\end{align}
Combining~\eqref{eq_firstPart} and~\eqref{eq_secPart}, we find by applying the same arguments as in~\eqref{eq_firstPart}
\begin{align}
&D\bigl(\rho^{\otimes n} \big\| (\cR^n_{\sigma,\cN} \circ \cN^{\otimes n})(\rho^{\otimes n})\bigr) \nonumber \\
& \hspace{15mm}\leq n D(\rho \| \sigma) + n \, \tr\bigl( \cN(\rho) \log \cN(\sigma) \bigr)  - \tr\Bigl(\cN(\rho)^{\otimes n} \log \cP_{\cN(\sigma),n} (\cN(\rho)^{\otimes n})  \Bigr) + O(\log n)  \\
&\hspace{15mm}\leq n D(\rho \, \| \sigma) - n D\bigl(\cN(\rho) \big\| \cN(\sigma) \bigr) + O(\log n) \ ,
\end{align}
which proves the statement of Proposition~\ref{prop_main}.
%Since for any two bounded sequences $(a_n)_{n \in \N}$ and $(b_n)_{n \in \N}$ in $\R$ such that $\lim_{n \to \infty} a_n = a$, we have $\limsup_{n \to \infty}(a_n + b_n) = a + \limsup_{n \to \infty} b_n$, this proves the statement of Proposition~\ref{prop_main}.
\end{proof}

%%%%%%%%%%%%%%%%%%%%%%%%%%%%%%%%%% 
\subsection{Proof of Theorem~\ref{thm_monoMeasRel}}

We start by proving a basic property of the measured relative entropy as defined in~\eqref{eq_defMeasRelEnt}. An alternative proof for Theorem~\ref{thm_monoMeasRel} can be found in Appendix~\ref{ap_piani}.
As shown in~\cite[Sections~11.9 and 11.10]{petz_Entropybook} (see also~\cite{Petz_variational88,FBT15}), there exists a variational characterization for the measured relative entropy of the form
\begin{align} \label{eq_variationalMeasRelEnt}
\MD(\rho \| \sigma) &= \sup_{\omega >0} \tr(\rho \log \omega) - \log \tr(\sigma \omega) \\
&= \sup_{\omega >0} \tr(\rho \log \omega) + 1 - \tr(\sigma \omega) \ .
\end{align}

\begin{lemma} \label{lem_singleLetterMeasRelEnt}
Let $\X$ be a compact space. For any probability measure $\mu \in \bPr(\X)$, any family $\{\sigma_x\}_{x \in \X}$ such that $\sigma_x \in \Pos(A)$ for all $x \in \X$, $\sigma=\int \mu(\di x) \sigma_x$, any $\rho \in \St_{\sigma}(A)$ and any $n \in \N$, we have
\begin{align}
\frac{1}{n}\MD\Bigl( \rho^{\otimes n} \Big\| \int \mu(\di x) \sigma_x^{\otimes n}  \Bigr) \geq \min_{\sigma \in \conv\{\sigma_x : x \in \X \} } \MD(\rho \| \sigma) \ .
\end{align}
\end{lemma}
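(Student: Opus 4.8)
The plan is to combine the variational characterization~\eqref{eq_variationalMeasRelEnt} of the measured relative entropy with a \emph{product test operator}, which reduces the $n$-letter quantity to a single-letter optimization, and then to exchange a supremum and a minimum by a minimax argument. Throughout write $m := \min_{\tau \in \conv\{\sigma_x\}} \MD(\rho\|\tau)$ and abbreviate $\sigma^{(n)} := \int \mu(\di x)\, \sigma_x^{\otimes n}$. The minimum defining $m$ is attained because $\conv\{\sigma_x : x \in \X\}$ is compact; this holds whenever $x \mapsto \sigma_x$ is bounded on the compact space $\X$, which is automatic in the application where the dependence on $x$ is continuous and $\Pos(A)$ is finite-dimensional.

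First I would plug the product ansatz $\omega = \omega_0^{\otimes n}$ into the first line of~\eqref{eq_variationalMeasRelEnt} applied to $\MD\bigl(\rho^{\otimes n}\big\|\sigma^{(n)}\bigr)$. Since $\tr\bigl(\rho^{\otimes n}\log \omega_0^{\otimes n}\bigr) = n\,\tr(\rho\log\omega_0)$ and $\tr\bigl(\sigma^{(n)}\omega_0^{\otimes n}\bigr) = \int \mu(\di x)\,(\tr \sigma_x\omega_0)^n$, this gives, for every $\omega_0 > 0$,
\begin{align}
\MD\bigl(\rho^{\otimes n}\big\|\sigma^{(n)}\bigr) \geq n\,\tr(\rho\log\omega_0) - \log \int \mu(\di x)\,(\tr \sigma_x\omega_0)^n \ .
\end{align}
The key elementary step is the bound $\int \mu(\di x)\,(\tr \sigma_x\omega_0)^n \leq \bigl(\max_x \tr\sigma_x\omega_0\bigr)^n$, valid because $t \mapsto t^n$ is increasing on $[0,\infty)$ and $\mu$ is a probability measure; together with monotonicity of $-\log$ this yields
\begin{align}
\frac1n \MD\bigl(\rho^{\otimes n}\big\|\sigma^{(n)}\bigr) \geq \tr(\rho\log\omega_0) - \log \max_{\tau \in \conv\{\sigma_x\}} \tr \tau\omega_0 \ ,
\end{align}
where I used that a linear functional attains the same maximum over a set and over its convex hull. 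Taking the supremum over $\omega_0 > 0$ on the right (the left-hand side being independent of $\omega_0$) bounds $\tfrac1n \MD\bigl(\rho^{\otimes n}\big\|\sigma^{(n)}\bigr)$ below by $\sup_{\omega>0} \min_{\tau} f(\omega,\tau)$, where $f(\omega,\tau) := \tr(\rho\log\omega) - \log \tr\tau\omega$ and $\tau$ ranges over $\conv\{\sigma_x\}$.

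It then remains to show $\sup_{\omega} \min_{\tau} f = m = \min_{\tau} \sup_{\omega} f$, i.e.\ to exchange the two optimizations. The hard part is that $f$ is concave in $\omega$ only through its first summand, while $-\log\tr\tau\omega$ is convex in $\omega$, so Sion's minimax theorem does not apply to $f$ directly. I would circumvent this by passing to the second line of~\eqref{eq_variationalMeasRelEnt}: the function $g(\omega,\tau) := \tr(\rho\log\omega) + 1 - \tr\tau\omega$ is concave in $\omega$ (operator-concave plus affine), affine in $\tau$, and $\conv\{\sigma_x\}$ is compact and convex, so Sion's theorem gives $\min_{\tau}\sup_{\omega} g = \sup_{\omega}\min_{\tau} g$. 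Using $\sup_{\omega} g(\omega,\tau) = \MD(\rho\|\tau) = \sup_{\omega} f(\omega,\tau)$ and the pointwise inequality $f(\omega,\tau) = \sup_{c>0} g(c\omega,\tau) \geq g(\omega,\tau)$, one obtains the sandwich
\begin{align}
m = \sup_{\omega>0}\min_{\tau} g(\omega,\tau) \leq \sup_{\omega>0}\min_{\tau} f(\omega,\tau) \leq \min_{\tau}\sup_{\omega>0} f(\omega,\tau) = m \ ,
\end{align}
where the last inequality is the trivial max--min bound. Hence $\sup_{\omega}\min_{\tau} f = m$, and combining with the previous paragraph finishes the proof. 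The only genuinely nontrivial point is this minimax exchange together with the observation that it must be carried out on the jointly well-behaved form $g$ rather than on $f$; everything else is the product ansatz and the inequality $\int \mu(\di x)\,(\tr\sigma_x\omega)^n \leq (\max_x \tr\sigma_x\omega)^n$.
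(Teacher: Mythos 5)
Your proof is correct and follows essentially the same route as the paper's: the product ansatz $\omega_0^{\otimes n}$ in the variational characterization, the bound $\int \mu(\di x)\,(\tr \sigma_x\omega_0)^n \leq \bigl(\max_x \tr \sigma_x\omega_0\bigr)^n$ together with passing to the convex hull of the linear functional's maximizers, the replacement of $-\log\tr(\tau\omega)$ by $1-\tr(\tau\omega)$ via $\log t \leq t-1$, and Sion's minimax theorem applied to the concave--affine function $g$. The only cosmetic difference is that the paper bounds $f \geq g$ pointwise and applies Sion once to conclude $\geq \min_\tau \MD(\rho\|\tau)$, whereas you package the same inequalities as a sandwich establishing the exact identity $\sup_{\omega}\min_{\tau} f = m$ --- equivalent content, presented slightly more elaborately.
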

\begin{proof}
With the variational characterization for the measured relative entropy given in~\eqref{eq_variationalMeasRelEnt} we find
\begin{align}
\MD\Bigl( \rho^{\otimes n} \Big\| \int \mu(\di x) \sigma_x^{\otimes n}  \Bigr)  
&\geq \sup_{\omega > 0} \tr \bigl( \rho^{\otimes n} \log \omega^{\otimes n} \bigr) - \log \tr \Bigl( \int \mu(\di x) \sigma_x^{\otimes n} \omega^{\otimes n} \Bigr)  \\
&\geq \sup_{\omega >0} \min_{x \in \X} n \tr(\rho \log \omega) - n \log\tr(\sigma_x \omega) \ .
\end{align}
For $x \in \R_+$, clearly $\log x \leq x -1$ and thus $- \log \tr(\sigma \omega) \geq 1 - \tr(\sigma \omega)$ for all $\omega >0$. This implies that
\begin{align}
\MD\Bigl( \rho^{\otimes n} \Big\| \int \mu(\di x) \sigma_x^{\otimes n}  \Bigr) 
&\geq n \sup_{\omega >0} \min_{x \in \X} \tr(\rho \log \omega) + 1 - \tr(\sigma_x \omega)  \\
&\geq n \sup_{\omega >0} \min_{\sigma \in \conv\{\sigma_x : x \in \X \}} \tr(\rho \log \omega) + 1 - \tr(\sigma \omega) \ .
\end{align}

The function $\omega \mapsto \tr(\rho \log \omega) + 1 - \tr(\sigma \omega) $ is clearly concave and the function $\sigma \mapsto  \tr(\rho \log \omega) + 1 - \tr(\sigma \omega) $ is linear. The set $\conv\{\sigma_x : x \in \X \}$ is compact and convex and the set of strictly positive operators is convex.
 As a result we can apply Sion's minimax theorem~\cite{Sion58} which gives
\begin{align}
\frac{1}{n}\MD\Bigl( \rho^{\otimes n} \Big\| \int \mu(\di x) \sigma_x^{\otimes n}  \Bigr) 
&\geq \min_{\sigma \in \conv\{\sigma_x : x \in \X \}} \sup_{\omega >0}   \tr(\rho \log \omega) + 1 - \tr(\sigma \omega)  \\
& = \min_{\sigma \in \conv\{\sigma_x : x \in \X \} } \MD(\rho \| \sigma)  \ ,
\end{align}
where the final step follows by the variational characterization of the measured relative entropy given in~\eqref{eq_variationalMeasRelEnt}.
\end{proof}

For any $\sigma \in \Pos(A)$ and $\cN \in \TPCP(A,B)$ let $\TT_{\sigma,\cN}$ denote the convex hull of all rotated Petz recovery maps (defined in Equation~\eqref{eq_convexHullRotatedPetz}) and let $\bPr(\cQ)$ be the set of probability measures on $\cQ:=[0,2\pi]^{\times d_1}\times[0,2\pi]^{\times d_2}$.
Combining this with~Proposition~\ref{prop_main}, the monotonicity of the relative entropy under trace-preserving completely positive maps and Lemma~\ref{lem_pinchingVsRotPetz} gives
\begin{align}
&D(\rho \| \sigma) - D\bigl(\cN(\rho) \big\| \cN(\sigma) \bigr) \nonumber \\
&\hspace{20mm}\geq \liminf_{n \to \infty} \frac{1}{n} D\Bigl(\rho^{\otimes n} \big\| \cR^n_{\sigma,\cN}\bigl( \cN(\rho)^{\otimes n} \bigr) \Bigr)  \\
& \hspace{20mm}\geq \liminf_{n \to \infty} \frac{1}{n} \MD\Bigl(\rho^{\otimes n} \big\| \cR^n_{\sigma,\cN}\bigl( \cN(\rho)^{\otimes n} \bigr) \Bigr)  \\
&\hspace{20mm}= \liminf_{n \to \infty} \frac{1}{n} \MD\Big(\rho^{\otimes n} \Big\| \frac{1}{(2 \pi)^{d_1}}  \int_{[0,2\pi]^{\times d_1}}  \di \vartheta  \, \,\frac{1}{(2 \pi)^{d_2}} \int_{[0,2\pi]^{\times d_2}} \di \varphi  (\cT^{\varphi,\vartheta}_{\sigma,\cN} \circ \cN)(\rho)^{\otimes n} \Big)  \\
&\hspace{20mm}\geq \min_{\mu \in \bPr(\cQ)} \MD\Bigl(\rho \, \Big\| \int  \di \mu(\varphi,\vartheta) \, (\cT_{\sigma,\cN}^{\varphi,\vartheta} \circ \cN)(\rho)\Bigr) \ ,
 \end{align}
where the final inequality follows by Lemma~\ref{lem_singleLetterMeasRelEnt}. \qed

%%%%%%%%%%%%%%%%%%%%%%%%%%%%%%%%%% 
\subsection{Proof of Proposition~\ref{prop_UB}}
\begin{proof}
Let $\cR \in \TPCP(B,A)$ be such that $(\cR \circ \cN)(\rho)=\rho$ and define $\lambda:= D_{\max}((\cR \circ \cN)(\sigma)\| \sigma)$. By definition of the max-relative entropy we have
\begin{align}\label{eq_byDef}
(\cR \circ \cN)(\sigma) \leq 2^\lambda \sigma \ .
\end{align}
Combining this with the monotonicity of the operator logarithm gives
\begin{align}
D(\rho \| \sigma) - D\bigl( \cN(\rho) \| \cN(\sigma) \bigr)
&= D\bigl((\cR \circ \cN)(\rho) \| \sigma \bigr) - D\bigl( \cN(\rho) \| \cN(\sigma) \bigr)  \\
&\leq \lambda + D\bigl((\cR \circ \cN)(\rho) \| (\cR \circ \cN)(\sigma)  \bigr) - D\bigl( \cN(\rho) \| \cN(\sigma) \bigr)  \\
&\leq \lambda \ ,
\end{align}
where the final inequality uses the data processing inequality. Since this argument applies to every recovery map $\cR \in \TPCP(B,A)$ that satisfies $(\cR \circ \cN)(\rho)=\rho$ the assertion follows.
\end{proof}

 %%%%%%%%%%%%%%%%%%%%%%%%%%%%%%%%%% 
\section{Discussion}
The recovery map predicted by Theorem~\ref{thm_monoMeasRel} has a potential dependence on the state $\rho$. It would be natural to assume that there exists a recovery map $\cR \in \TT_{\sigma,\cN}$ satisfying~\eqref{eq_generalizedMarkBound} that is universal in the sense that it only depends on $\sigma$ and $\cN$. This has been proven recently~\cite{SFR15} for the case of the conditional mutual information (i.e., for $\rho=\rho_{ABC}$, $\sigma=\rho_{BC}$, and $\cN(\cdot)=\tr_C(\cdot)$). As it turns out, Theorem~\ref{thm_monoMeasRel} together with the techniques used in~\cite{SFR15} implies a universal version of Theorem~\ref{thm_monoMeasRel}~\cite{JRSWW15}. 
%We would like to emphasize that for the universality argument it is important to a have a (non-universal) lower bound in terms of a measured relative entropy as proven in this work. (In case of a non-universal lower bound in terms of a logarithm of the fidelity as shown in~\cite{Wilde15}, we would have to linearize the bound in order to prove a universal statement which leads to a strictly weaker result.)

%It is unknown if there exists a recovery map satisfying~\eqref{eq_generalizedMarkBound} and perfectly reconstructing $\sigma$ from $\cN(\sigma)$ that can be described as a simple and explicit function of $\sigma$ and $\cN$. It was conjectured in~\cite{KBW14}, that the Petz recovery map fulfills the second inequality in~\eqref{eq_generalizedMarkBound}. Proposition~\ref{prop_main} could be viewed as an indication that the pinching recovery map given in~\eqref{eq_recMap} (for $n=1$) could satisfy~\eqref{eq_generalizedMarkBound}. However, numerical evidence shows that the two inequalities in~\eqref{eq_generalizedMarkBound} are not valid for the pinching recovery map. 

%%%%%%%%%%%%%%%%%%%%%%%%%%%%%%%%%%%%%%%
%%%%%%%%%%%%%%%%%%%%%%%%%%%%%%%%%%%%%%%
%%%%%%%%%%%%%%%%%%%%%%%%%%%%%%%%%%%%%%%
%%%%

\appendix

\section*{Appendices}

%%%%%%%%%%%%%%%%%%%%%%%%%%%%%%%%%%%%%%%
\section{Alternative proof of Theorem~\ref{thm_monoMeasRel}} \label{ap_piani}
This appendix presents an alternative proof of Theorem~\ref{thm_monoMeasRel}.
For any $\sigma \in \Pos(A)$ and $\cN \in \TPCP(A,B)$ let $\TT_{\sigma,\cN}$ denote the convex hull of all rotated Petz recovery maps (defined in Equation~\eqref{eq_convexHullRotatedPetz}) and let $\bPr(\cQ)$ be the set of probability measures on $\cQ:=[0,2\pi]^{\times d_1}\times[0,2\pi]^{\times d_2}$. According to Theorem~1 of~\cite{piani09} (see also Lemma~8 and the proof of Proposition~4 in~\cite{BHOS14})\footnote{To see how Theorem~1 of~\cite{piani09} can be applied it may be helpful to consider the reference set $\mathrm{K}:=\cup_{k \in \N} \conv\{ {(\cT_{\sigma,\cN}\circ \cN)}(\rho)^{\otimes k} : \cT_{\sigma,\cN} \in \TT_{\sigma,\cN} \}$.} 
we find for any $n \in \N$
\begin{align} \label{eq_argument}
&\min_{\mu \in \bPr(\cQ)} D\Bigl(\rho^{\otimes n} \Big\| \int \di \mu(\varphi,\vartheta) \, (\cT_{\sigma,\cN}^{\varphi,\vartheta} \circ \cN)(\rho)^{\otimes n} \Bigr) \nonumber  \\
&\hspace{0mm}\geq \! \min_{\mu \in \bPr(\cQ)} \! \MD\Bigl(\rho \, \Big\| \int \di \mu(\varphi,\vartheta) \, (\cT_{\sigma,\cN}^{\varphi,\vartheta} \circ \cN)(\rho)\Bigr) + \min_{\mu \in \bPr(\cQ)}  D\Bigl(\rho^{\otimes n-1} \Big\| \int \di\mu(\varphi,\vartheta)  \, (\cT_{\sigma,\cN}^{\varphi,\vartheta} \circ \cN)(\rho)^{\otimes n-1} \Bigr) \ .
\end{align}
We note that the minima are attained. To see this, we first remark that the set $\cQ$ is clearly compact and as a result, the set $\bPr(\cQ)$ is weak* compact~\cite[Theorem~15.11]{aliprantis07}. Furthermore, the function $\bPr(\cQ) \ni \mu \mapsto D\bigl(\rho^{\otimes n} \| \int \di \mu(\varphi,\vartheta) \, {(\cT^{\varphi,\vartheta}_{\sigma,\cN} \circ \cN)(\rho)^{\otimes n}} \bigr)\in  \R_+$ is lower semicontinuous with respect to the weak* topology, since expectation values are continuous in the weak* topology and the relative entropy is lower semicontinuous~\cite[Exercise~7.22 and Theorem~11.6]{holevo_book}.
The extreme value theorem thus ensures that the minima are attained.
Iterating the argument presented in~\eqref{eq_argument} $n$ times shows that for any $n \in \N$
\begin{align}
&\frac{1}{n} \min_{\mu \in \bPr(\cQ)} D\Bigl(\rho^{\otimes n} \Big\| \int \di\mu(\varphi,\vartheta) \, (\cT_{\sigma,\cN}^{\varphi,\vartheta} \circ \cN)(\rho)^{\otimes n} \Bigr) \geq \min_{\mu \in \bPr(\cQ)} \MD\Bigl(\rho \, \Big\| \int \di \mu(\varphi,\vartheta) \, (\cT_{\sigma,\cN}^{\varphi,\vartheta} \circ \cN)(\rho)\Bigr) \ .
\end{align}

Combining this with~Proposition~\ref{prop_main} and Lemma~\ref{lem_pinchingVsRotPetz} gives
\begin{align}
&D(\rho \| \sigma) - D\bigl(\cN(\rho) \big\| \cN(\sigma) \bigr)  \nonumber \\
&\hspace{20mm}\geq \liminf_{n \to \infty} \frac{1}{n} D\Bigl(\rho^{\otimes n} \big\| \cR^n_{\sigma,\cN}\bigl( \cN(\rho)^{\otimes n} \bigr) \Bigr)  \\
&\hspace{20mm}= \liminf_{n \to \infty} \frac{1}{n} D\Big(\rho^{\otimes n} \Big\| \frac{1}{(2 \pi)^{d_1}}  \int_{[0,2\pi]^{\times d_1}}  \di \vartheta  \, \,\frac{1}{(2 \pi)^{d_2}} \int_{[0,2\pi]^{\times d_2}} \di \varphi  (\cT^{\varphi,\vartheta}_{\sigma,\cN} \circ \cN)(\rho)^{\otimes n} \Big)  \\
&\hspace{20mm}\geq \liminf_{n \to \infty} \frac{1}{n}\! \min_{\mu \in \bPr(\cQ)}\!\! D\Bigl(\rho^{\otimes n} \Big\| \int \di \mu(\varphi,\vartheta) \, (\cT_{\sigma,\cN}^{\varphi,\vartheta} \circ \cN)(\rho)^{\otimes n} \Bigr)  \\
&\hspace{20mm}\geq \min_{\mu \in \bPr(\cQ)} \MD\Bigl(\rho \, \Big\| \int  \di \mu(\varphi,\vartheta) \, (\cT_{\sigma,\cN}^{\varphi,\vartheta} \circ \cN)(\rho)\Bigr) \ ,
 \end{align}
which proves the first inequality of Theorem~\ref{thm_monoMeasRel}. The second inequality is a consequence of the monotonicity of quantum R\'enyi divergence in the order parameter~\cite{MLDSFT13} and of the fact that for any two states there exists an optimal measurement that does not increase their fidelity~\cite[Section~3.3]{Fuc96}.
 \qed

\section*{Acknowledgments}
We thank Renato Renner and Mark M.\ Wilde for helpful discussions. DS acknowledges support by the European Research Council (ERC) via grant No.~258932, by the Swiss National Science Foundation (SNSF) via the National Centre of Competence in Research ``QSIT'', and by the European Commission via the project ``RAQUEL''. 
MT is funded by an University of Sydney Postdoctoral Fellowship and acknowledges support from the ARC Centre of Excellence for Engineered Quantum Systems (EQUS). 
AWH was funded by NSF grants CCF-1111382 and CCF-1452616 and ARO contract W911NF-12-1-0486.
%%%%%%%%%%%%
%%%%%%%%%%%%%%%%%%%%%%%%%%%%%%%%%%%%%%%%%%%%%%%%%%%
%%%%%%%%%%%%%%%%%%%%%%%%%%%%%%%%%%%%%%%%%%%%%%%%%%%
%%%%%%%%%%%%%%%%%%%%%%%%%%%%%%%%%%%%%%%%%%%%%%%%%%%
%%%%%%%%%%%%%%%%%%%%%%%%%%%%%%%%%%%%%%%%%%%%%%%%%%%
  \bibliographystyle{abbrv}
  
  \bibliography{bibliofile}

\end{document}